\documentclass[11pt,oneside]{article} 
\pdfoutput=1
\usepackage{geometry}   
\geometry{letterpaper}  
\usepackage{graphicx}   
\usepackage{amssymb}
\usepackage{float}
\usepackage[cmex10]{amsmath}
\usepackage{amsthm}

\newcommand\bbR{\mathbb R}
\newcommand\bx{\boldsymbol x}
\newcommand\by{\boldsymbol y}
\newcommand\bE{\boldsymbol E}
\newcommand\ds{\boldsymbol ds}
\newcommand\bH{\boldsymbol H}
\newcommand\bA{\boldsymbol A}
\newcommand\bJ{\boldsymbol J}

\newcommand\In{\operatorname{inc}}
\newcommand\Sc{\operatorname{scat}}
\newcommand\bn{\boldsymbol n}

\newcommand\tot{\operatorname{}}
\newtheorem{remark}{Remark}
\newtheorem{definition}{Definition}
\newtheorem{thm}{Theorem}
\newtheorem{lemma}{Lemma}

\title{The Decoupled Potential Integral Equation for Time-Harmonic 
Electromagnetic Scattering}
\author{Felipe Vico\thanks{Instituto de Telecomunicaciones y Aplicaciones Multimedia (ITEAM), 
Universidad Polit\` ecnica
de Val\` encia, 46022 Val\` encia, Spain. {{\em email}: {\sf {felipe.vico@gmail.com,
}}} {{\sf {mferrand@dcom.upv.es.
}}} } \and
Leslie Greengard\thanks{Courant Institute of Mathematical Sciences,
         New York University, 
         251 Mercer Street,
         New York, NY 10012-1110.
{{\em email}: {\sf {greengard@cims.nyu.edu.}}}} \and
Miguel Ferrando\footnotemark[1]\and
 Zydrunas Gimbutas\thanks{%
Information Technology Laboratory,
National Institute of Standards and Technology,
325 Broadway, Mail Stop 891.01,
Boulder, CO  80305-3328.
{{\em email}: {\sf {zydrunas.gimbutas@nist.gov}}. 
Contributions by staff of NIST, an agency of the U.S. Government, 
are not subject to copyright within the United States.}
}}

\begin{document}
\maketitle
\begin{abstract}
\bibliographystyle{unsrt}

We present a new formulation for the problem of electromagnetic scattering from
perfect electric conductors. While our representation for the electric and
magnetic fields is based on the standard vector and scalar potentials $\bA,\phi$
in the Lorenz gauge, we establish boundary conditions on the potentials themselves, 
rather than on the field quantities. This permits the development of a well-conditioned
second kind Fredholm integral equation which has no spurious resonances, 
avoids low frequency breakdown, and is insensitive to the genus of the
scatterer. The equations for the vector and scalar potentials are decoupled.
That is, the unknown scalar potential defining the scattered field, $\phi^{\Sc}$, 
is determined entirely by the incident scalar potential $\phi^{\In}$.
Likewise, the unknown vector potential defining the scattered field,
$\bA^{\Sc}$, is determined entirely by the incident vector potential $\bA^{\In}$. This decoupled formulation is valid not only in the static limit but for arbitrary $\omega\ge 0$.
\end{abstract}

{\bf Keywords.} Charge-current formulations, electromagnetic theory,
electromagnetic (EM) scattering, low-frequency breakdown, Maxwell
equations.

\newpage
\section{Introduction}

In this paper, we consider the problem of exterior scattering of time-harmonic 
electromagnetic waves by perfect electric conductors. For
a fixed frequency $\omega$, we assume that the electric and magnetic fields take the form
\begin{equation}
\begin{aligned}
\mathcal{E}(\bx,t)&=\Re\big\{\bE(\bx)e^{-i\omega t}\big\},\\
\mathcal{H}(\bx,t)&=\Re\big\{ \bH(\bx)e^{-i\omega t}\big\}, 
\end{aligned}
\end{equation}
so that Maxwell's equations are
\begin{equation}
\begin{aligned}
\nabla\times\bE^{\tot}(\bx)&=i\omega\mu\bH^{\tot}(\bx),\\
\nabla\times\bH^{\tot}(\bx)&=-i\omega\epsilon\bE^{\tot}(\bx).
\end{aligned}
\end{equation}
Following standard practice, we write the {\em total} 
electric and magnetic fields as a sum 
of the (known) incident and (unknown) scattered fields:
\begin{equation}
\begin{aligned}
\bE &=\bE^{\In}+\bE^{\Sc},\\ 
\bH &=\bH^{\In}+\bH^{\Sc}.
\end{aligned}
\end{equation}
The scattered field in the exterior must satisfy the Sommerfeld-Silver-M\"{u}ller 
radiation condition:
\begin{equation}\label{EMrad}
\begin{array}{ll}
\bH^{\Sc}(\bx)\times\frac{\bx}{|\bx|}-\sqrt{\frac{\mu}{\epsilon}}\bE^{\Sc}(\bx)=o\Big(\frac{1}{|\bx|}\Big),&|\bx|\rightarrow \infty\, .
\end{array}
\end{equation}

It is well-known that when the scatterer, denoted by $D$, is a perfect
conductor, the conditions to be enforced on its boundary are
\cite{JACKSON,PAPAS}
\begin{eqnarray}\label{etang}
\bn\times\bE^{\tot}(\bx) &=& \mathbf{0}|_{\partial D},
\ \Rightarrow\ \bn\times\bE^{\Sc}(\bx)=-\bn\times\bE^{\In}(\bx)|_{\partial D},
\\
\label{hnorm}
\bn\cdot\bH^{\tot}(\bx) &=& 
0|_{\partial D},
\ \Rightarrow\ \bn\cdot\bH^{\Sc}(\bx)=-\bn\cdot\bH^{\In}(\bx)|_{\partial D},
\end{eqnarray}
where $\bn$ is the outward unit normal to the boundary $\partial D$ of the
scattered.  It is also well-known that
\begin{equation}\label{enorm}
\bn\cdot\bE^{\tot}(\bx)=\frac{\rho}{\epsilon}|_{\partial D},
\end{equation}
\begin{equation}\label{htang}
\bn\times\bH^{\tot}(\bx)=\bJ|_{\partial D},
\end{equation}
where $\bJ$ and $\rho$ are the induced current density and charge on the 
surface $\partial D$. In order to satisfy the Maxwell equations, 
$\bJ$ and $\rho$ must satisfy the continuity condition 
$\nabla_s\cdot \bJ=i\omega\rho$, where 
$\nabla_s \cdot \bJ$ denotes the surface divergence of the tangential current density.
It is also well-known that
the exterior problem for $\bE^{\Sc}$ 
has a unique solution for $\omega >0$  when boundary conditions are prescribed
on its tangential components (see, for example, \cite{CK2}):
\begin{equation}
\bn\times\bE^{\Sc}(\bx)=\mathbf{f(x)}|_{\partial D} \, ,
\end{equation}
for an arbitrary tangential vector field  $\mathbf{f}$. On a perfect conductor, 
$\mathbf{f(x)} = -\bn\times\bE^{\In}(\bx)$ to enforce (\ref{etang}).

\subsection{The vector and scalar potential}

Scattered electromagnetic fields are typically represented in terms of the
induced surface current $\bJ$ and charge $\rho$ using
the vector and scalar potentials in the Lorenz gauge:
\begin{eqnarray} \label{fields_potentials}
\bE^{\Sc} &=& i \omega \bA^{\Sc} - \nabla \phi^{\Sc},  \label{Epotrep} \\
\bH^{\Sc} &=& \frac{1}{\mu}\nabla \times \bA^{\Sc},  \label{Hpotrep} 
\end{eqnarray}
where
\[
\bA^{\Sc}[\bJ](\bx) =\mu S_k[\bJ](\bx) \equiv\mu  \int_{\partial D} g_k(\bx-\by) \bJ(\by) dA_{\by},  
\]
\begin{equation}
 \phi^{\Sc}[\rho](\bx) =   \frac{1}{\epsilon}S_k[\rho](\bx) \equiv
 \frac{1}{\epsilon}\int_{\partial D} g_k(\bx-\by) \rho(\by) dA_{\by}, 
\label{Skdef}
\end{equation}
with
\[ g_k(\bx) = \frac{e^{ik |\bx|}}{ 4 \pi | \bx|}  \]
and $k=\omega\sqrt{\epsilon\mu}$.
The Lorenz gauge is defined by the relation
\begin{equation}\label{Lorenz}
\nabla\cdot \bA^{\Sc}=i\omega\mu\epsilon\phi^{\Sc} \, .
\end{equation}
We will often refer to $\phi^{\Sc}$ and $\bA^{\Sc}$ as the {\em scalar
  and vector Helmholtz potentials} since $\phi^{\Sc}$ and $\bA^{\Sc}$
satisfy the Helmholtz equations with wavenumber $k$:
\begin{equation}
  \quad  \Delta \phi^{\Sc} +k^2 \phi^{\Sc} = 0, \quad
  \Delta \bA^{\Sc} +k^2 \bA^{\Sc} = \mathbf{0}.
\end{equation}
Using the representation (\ref{Epotrep}) for the electric field and imposing the 
boundary condition (\ref{etang}) results in the Electric Field Integral Equation (EFIE),   
\cite{CHEW,JIN,MAUE,Muller}:
\begin{align}
\label{efie}
 i \omega \bn \times  \bA^{\Sc}[\bJ](\bx) &- 
\bn \times\nabla \phi^{\Sc} \left[ \frac{\nabla_s \cdot \bJ}{i\omega} \right](\bx) \\
&= -\bn \times \bE^{\In}(\bx), \qquad  \bx \in \partial D. \nonumber
\end{align}
The representation (\ref{Hpotrep}) for the magnetic field and the 
boundary condition (\ref{htang}) results in the Magnetic Field Integral Equation (MFIE):
\begin{equation}
\frac{1}{2} \bJ(\bx) - K[\bJ](\bx)
= \bn(\bx) \times \bH^{\In}(\bx), \qquad  \bx \in \partial D,
\label{mfie}
\end{equation}
where 
\begin{equation}
K[\bJ](\bx) = 
 \int_{\partial D} \bn(\bx) \times \nabla \times  g_k(\bx-\by) \bJ(\by) dA_{\by} \, .
\label{Kdef}
\end{equation}

There is an enormous literature on the properties of these integral equations,
which we will not review here, except to note that the EFIE is poorly scaled;
one term in the representation of $\bE$ is of order $O(\omega)$ and one term is
of the order $O(\omega^{-1})$. This makes it difficult to compute both the
solenoidal and irrotational components of the current $\bJ$ and causes 
ill-conditioning in the integral equation at low frequencies --- a phenomenon
generally referred to as ``low-frequency breakdown" \cite{ZCCZ,Looptree4}.
Both the EFIE and the MFIE are also subject to spurious resonances at a 
countable set of frequencies $\omega_j$ going to infinity. Below the first
such resonance, the MFIE is a well-conditioned second kind Fedholm integral equation.
While low-frequency breakdown is obvious in the EFIE, it is not entirely 
avoided by switching to the MFIE \cite{ZCCZ}. 
The problem is that the current $\bJ$ is not sufficient for computing accurately the electric field. 
Note for example that 
\begin{equation}
\bn \cdot \bE = \rho = \frac{\nabla_{s} \cdot \bJ}{i \omega \epsilon} \, .
\label{escatnJ}
\end{equation}
As $\omega \rightarrow 0$, what in numerical analysis is called
{\em catastrophic cancellation} causes a progressive loss of digits
\cite{SPK1,SPK2}. 
Catastrophic cancellation comes not just from the ill-conditioning associated
with the evaluation of a derivative. The current $\bJ$ is an $O(1)$ quantity,
while $\nabla_s \cdot \bJ$ is $O(\omega)$, amplifying the loss of digits.
A variety of remedies to solve this problem have been suggested. In the widest
use are methods based on 
specialized basis functions for the discretization of the current $\bJ$ itself.
Loop-tree and loop-star basis functions, for example, can be used to rescale the 
solenoidal and the irrotational parts of the current 
\cite{Looptree5,Looptree3,Looptree1,Looptree2,Looptree4}.
A second class of methods is based on using both current and charge as separate
unknowns. This avoids terms of the order 
$O(\omega^{-1})$ (see \cite{CCIE1,CCIE3,CCIEour,CCIE2,CCIE4}). 
Unfortunately, all of these approaches encounter a second difficulty in multiply-connected
domains --- a phenomenon which we refer to as ``topological low-frequency breakdown'' 
\cite{MICH,CK1}.  At zero frequency, the MFIE, Calderon-preconditioned EFIE and
charge-current based integral equations are all rank-deficient, with a nullspace of dimension related to the topology of the surface $\partial D$: $g$ for the MFIE, $2g$ for the  Calderon-preconditioned EFIE and $g+N$ for the charge-current based integral equations \cite{CK1,MICH,WERNER}, where $g$ is the genus of the surface $\partial D$ and $N$ is the number of connected components.
This inevitably leads to ill-conditioning in the low-frequency regime.
This problem was carefully analyzed in the paper \cite{MICH}, and
the nullspace characterized in terms of harmonic vector fields
\cite{CK1,MICH,WERNER}. 

\begin{definition}
Assuming $D$ is topologically equivalent to a sphere with $g$ handles,
one can choose $g$ surfaces $S_j$ in $\bbR^3\backslash D$ so that 
$\bbR^3\backslash (D \cup_{j=1}^g S_j)$ is simply connected. The boundaries of
these surfaces are loops on $\partial D$ called {\em $B$-cycles}. They go 
around the ``holes" and form a basis for the first homology group of the 
domain $D$.  
\end{definition}

\begin{figure}[H]
\begin{center}
\includegraphics[width=5in]{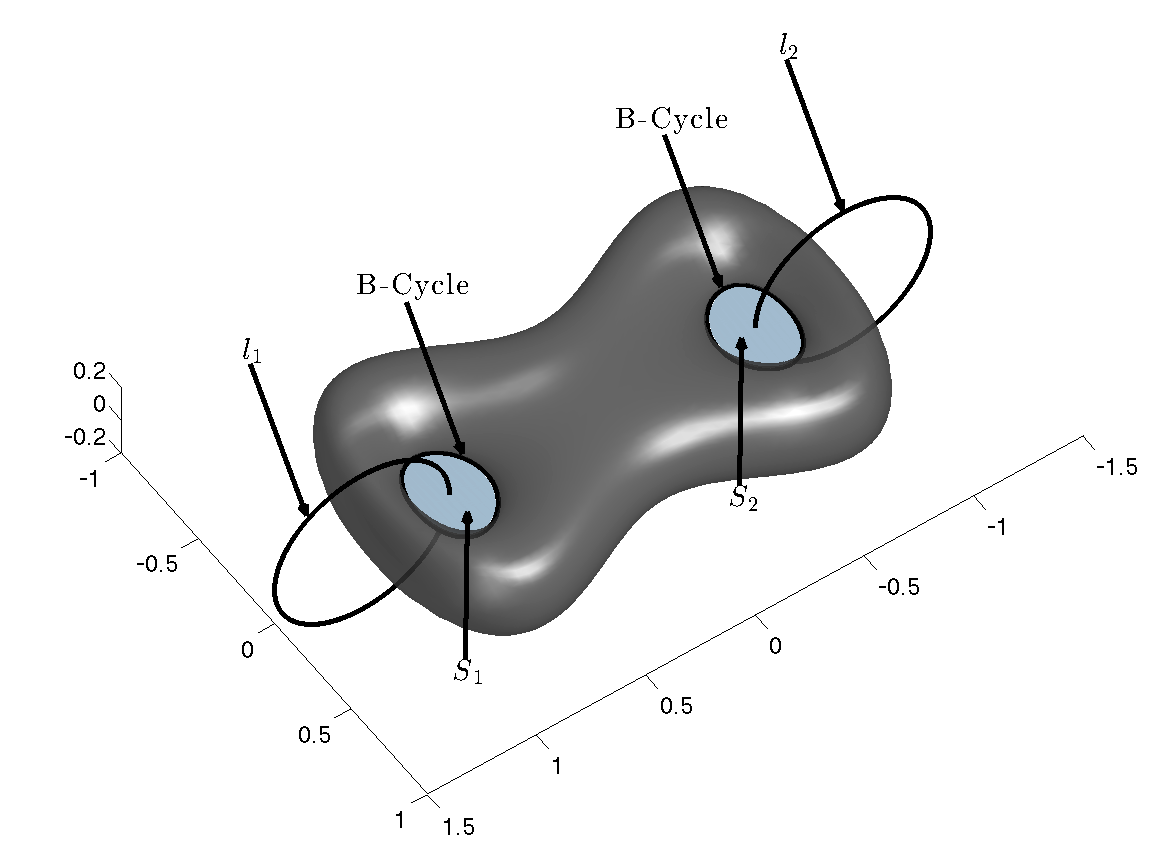}%
\end{center}
\caption{Double torus, number of connected components N=1, genus g=2, two B-cycles. $l_1$ and $l_2$ are the integration domain of integrals in (\ref{static_cond}). Surfaces $S_1,S_2$ are covering each hole of the surface. }
\label{Bcycles}
\end{figure}

In \cite{Mfietorus}, it was shown that the addition of 
a consistency condition of the form
\begin{equation}\label{stab_MFIE}
\begin{aligned}
 \int_{S_j}\bH^{\Sc} \cdot \bn \, da &=-\int_{S_j}\bH^{\In} \cdot \bn \, da
\end{aligned}
\end{equation}
or 
\begin{equation}
\int_{B_j} \bA^{\Sc} \cdot \ds = - \int_{B_j} \bA^{\In} \cdot \ds,
\label{Acond2}
\end{equation}
where the line integrals represent the circulation of the vector
potential, enforces uniqueness on the solution of the MFIE, assuming
$\omega$ is not at a spurious resonance.  This requires knowledge of,
or computation of, the genus, geometry tools that identify $B$-cycles,
and linear algebraic methods that are capable of efficiently solving
integral equations subject to constraints.

\begin{remark}
These topological issues can also be addressed through an analysis of the 
Hodge decomposition of the source current on the surface of the scatterer
itself, using the generalized Debye source representation of \cite{gDEBYE}.
Additional conditions are used (similar to (\ref{stab_MFIE})) to ensure
that the problem is well-posed.
\end{remark}

\begin{remark}
Very recently, a method was introduced 
that overcomes the topological low-frequency breakdown inherent in the EFIE by a
clever projection of 
the {\em discretized} problem using Rao-Wilton-Glisson (RWG) basis functions into a suitable subspace
\cite{MICHMULTI}. 
\end{remark}

In short, the various integral equations presently available pose
significant difficulties in the low-frequency regime.

\subsection{A decoupled formulation}
In this paper we introduce a new formulation for electromagnetic scattering
from perfect conductors. Rather than imposing boundary conditions on the field
quantities ($\bE$, $\bH$), we derive conditions on the potentials themselves.
Moreover, we show that the integral equations for $\bA^{\Sc}$
and $\phi^{\Sc}$ can be decoupled, lead to well-conditioned linear systems, and
are insensitive to the genus of the scatterer. More precisely, we seek to
impose the boundary conditions
\begin{equation}\label{Afitang}
\begin{array}{ll}
\bn\times\bA^{\Sc}(\bx)&=-\bn\times\bA^{\In}(\bx)|_{\partial D},\\
\bn\times\nabla \phi^{\Sc}(\bx)&=-\bn\times\nabla \phi^{\In}(\bx)|_{\partial D}.
\end{array}
\end{equation}

At first glance, there is an obvious difficulty with such an approach:
the vector and scalar potentials are not unique, a fact generally referred to as
{\em gauge freedom}. 
Even in the Lorenz gauge above,
the representation is known not to be unique. That is,
the condition (\ref{Lorenz}) does not completely determine the potentials 
$\bA^{\Sc},\phi^{\Sc}$.
To see this, consider the 
vector potentials $\bA'^{\Sc},\phi'^{\Sc}$ defined by
\begin{equation}\label{freedom}
\begin{aligned}
\bA'^{\Sc}[\bJ](\bx) &=\bA^{\Sc}[\bJ](\bx)+\nabla S_k[\sigma] (\bx),\\ 
 \phi'^{\Sc}[\rho](\bx) &= \phi^{\Sc}[\rho](\bx) + i\omega S_k[\sigma] (\bx).
 \end{aligned}
\end{equation}
Here, $\sigma$ is an arbitrary source on the surface $\partial D$.
It is straightforward to check that the fields 
$\bE^{\Sc},\bH^{\Sc}$ induced by $\bA'^{\Sc},\phi'^{\Sc}$ are the same as those
induced by $\bA^{\Sc},\phi^{\Sc}$, all the while satisfying the Lorenz gauge condition.

We will make use of this additional gauge freedom to establish a
well-posed boundary value problem and a stable, well-conditioned
integral equation.  In Section \ref{dpie:prel}, we consider the low-frequency 
limit of the exterior scattering from perfect conductors,
both for the sake of review and to motivate our formulation. In
Section \ref{dpie:sec}, we discuss the relevant existence, uniqueness
and stability results for what we refer to as the {\em decoupled
potential integral equation} (DPIE).  Finally, we discuss the stable
representation of the incoming field in terms of scalar and vector
potentials and the high-frequency behavior of the new
formulation.

\section{Preliminaries} \label{dpie:prel}

In this section, we consider the low-frequency limit of the Maxwell equations,
where the electric and magnetic fields are decoupled. We will refer to the 
electrostatic and magnetostatic fields by $\bE_0$ and $\bH_0$, respectively. 

\subsection{Electrostatics}\label{electrostatic}

The electrostatic field satisfies the equations
\begin{equation}
\label{estateqs}
\nabla\times\bE_0^{\tot}(\bx)=0, \quad
\nabla\cdot\bE_0^{\tot}(\bx)=0, \quad
\bx\in\mathbb{R}^3/D,
\end{equation}
which we decompose (as above) into incoming and scattered fields.
The scattered field must satisfy the radiation condition 
\begin{equation}\label{Estat}
\begin{array}{ll}
\bE_0^{\Sc}(\bx)=o(1), \quad |\bx|\rightarrow \infty.
\end{array}
\end{equation}
The boundary condition for the electrostatic field is
the same as that for any non-zero frequency,
\begin{equation}\label{etang0}
\bn\times\bE_0^{\tot}(\bx)=0, \quad \bx\in\partial D,
\end{equation}
but the solution is no longer unique.

The nullspace, that is functions satisfying
(\ref{estateqs}) and (\ref{etang0}) and the radiation condition (\ref{Estat}),
is of dimension $N$, where $N$ is the number of connected components of
the scatterers $\partial D$. 
They are known as {\em harmonic Dirichlet fields} \cite{CK1} with a basis
denoted by $\{\mathbf{Y}_j\}_{j=1}^N$.
It is straightforward to see that there are at least $N$ such solutions, since they 
correspond to the well-studied problem of capacitance.
To see this, let us denote by $\partial D_j$ the $j$th connected component of $\partial D$.
Taking the static limit of 
(\ref{Epotrep}), 
the scattered electrostatic field is described as the gradient of a scalar 
harmonic function:
\begin{equation}
\bE_0^{\Sc}=-\nabla \phi_0^{\Sc} \, .
\end{equation}
Imposing the boundary condition (\ref{etang0}) and assuming the incoming field
is represented in terms of an incoming potential $\phi^{\In}$, we have
\begin{eqnarray*}
\Delta \phi_0^{\Sc} &=& 0, \\
\bn\times\nabla\phi_0^{\Sc} &=& -\bn\times\nabla\phi_0^{\In}|_{\partial D}.
\end{eqnarray*}
It is clear that the preceding boundary condition is satisfied by any scattered
potential that satisfies the Dirichlet condition
\begin{equation}
\label{capdbc}
\phi_0^{\Sc}=-\phi_0^{\In}|_{\partial D_j}+V_j  \, ,
\end{equation}
where $V_j$ is an arbitrary constant on $\partial D_j$ that represents the voltage of each conductor (with respect to infinity).
The Dirichlet field $\mathbf{Y}_j$ corresponds to the 
gradient of $\phi_0^{\Sc}$ obtained by setting
$\phi_0^{\Sc}$ to zero on each boundary component $\partial D_i$ for $i \neq j$ and 
$\phi_0^{\Sc} = 1$ on $\partial D_j$.
Let us now define the scalars $Q_j$ by 
\begin{equation}
Q_j =  \int_{\partial D_j} \frac{\partial \phi_0^{\Sc}}{\partial n}ds=-\int_{\partial D_j} 
\mathbf{n}\cdot \mathbf{E}^{\Sc}_0 ds,
\label{capcharge}
\end{equation}
so that $-Q_j$ is the total charge on each conductor $\partial D_j$. The matrix that links the voltages $V_j$ and the charges $-Q_j$ is known as the {\em capacitance matrix} \cite{JACKSON}.
Since we are interested here in the time-harmonic Maxwell equations
and their zero-frequency limit, we must have charge neutrality on each
boundary component.  Thus, we are interested in studying
(\ref{capdbc}) where the voltages $V_j$ are additional unknowns, but for which
$N$ additional constraints are given of the form $Q_j=0$, for
$j=1,\dots,N$.

It is important to note that, in the static regime,
the problem suffers from more than non-uniqueness.
The boundary condition 
\[ \bn\times\nabla\phi_0^{\Sc}=-\bn\times\bE_0^{\In} \] cannot be
satisfied unless the incoming field is also an electrostatic field.
In particular, if the circulation of the incoming field
$\oint_{L\subset \partial D} \bE_0^{\In} \cdot d\mathbf{l}$ is not
zero on every closed loop $L$ on the surface $\partial D$, the
solution $\bE_0^{\Sc}$ does not exist. This follows 
easily from the fact that 
\[ \oint_{L\subset \partial D} \bE_0^{\In} \cdot d\mathbf{l}=
-\oint_{L\subset \partial D} \nabla\phi_0^{\Sc} \cdot d\mathbf{l}=0. \]
We refer the reader to \cite{CK1} for further discussion.

\subsection{Magnetostatics}

The magnetostatic field satisfies the equations
\begin{equation}
\label{hstateqs}
\nabla\times\bH_0^{\tot}(\bx)=0, \quad
\nabla\cdot\bH_0^{\tot}(\bx)=0, \quad
\bx\in\mathbb{R}^3/D,
\end{equation}
and the boundary condition 
\begin{equation}\label{hnorm0}
\bn\cdot\bH_0^{\tot}(\bx)=0|_{\partial D}.
\end{equation}
The total field is again decomposed into incoming and scattered fields, with
the scattered field satisfying the radiation condition (\ref{Hstat}),
\begin{equation}\label{Hstat}
\begin{array}{ll}
\bH_0^{\Sc}(\bx)=o(1), \quad |\bx|\rightarrow \infty.
\end{array}
\end{equation}
$\bH_0^{\Sc}$ can be described either as the curl of a harmonic vector potential 
$\mathbf{A_0}$ or as the gradient of a harmonic scalar potential $\phi_0^{\Sc}$.

The magnetostatic problem also suffers from non-uniqueness.
The nullspace, that is functions satisfying
(\ref{hstateqs}) and (\ref{hnorm0}) and the radiation condition (\ref{Hstat}),
is of dimension $g$, where $g$ is the genus of the surface 
$\partial D$. Elements of this space are called
{\em harmonic Neumann fields} $\{\mathbf{Z}_m\}_{m=1}^g$ \cite{CK1}. 
In order to completely specify the solution, additional
information, such as the total induced current $I_m$ on 
$g$ loops lying on the surface, must be specified:
\begin{equation}\label{static_cond}
\oint_{l_m}\bH_0^{\Sc}(\bx)\cdot d\mathbf{l}=I_m, \quad m=1,\ldots,g \, .
\end{equation}
The loops $l_m$ here go around the ``holes" (see Fig.
\ref{Bcycles}). That is, they are a basis for the first homology group
of $\bbR^3 \backslash D$, with spanning surfaces that lie in the {\em
  interior} of the scatterer, see \cite{CK1} for more details.  The
persistent currents $I_m$ at zero frequency are due to the potential presence of
superconducting loops (as we are considering scattering from perfect electric
conductors).

\subsection{Summary}

To summarize, the problem of 
electromagnetic scattering from perfect conductors is uniquely solvable for 
any $\omega$ strictly greater than zero.
At $\omega =0$, however, various subtleties arise.
The issue of Dirichlet fields needs to be resolved in electrostatics and 
the issue of Neumann fields needs to be resolved in magnetostatics.
For any $\omega$ strictly greater than zero, however, it is necessary that
the total charge $Q_j$ induced on any 
connected component of the scatterer be zero. Enforcing this condition at
$\omega=0$ 
(and introducing the additional unknown constants $V_j$ as above)
uniquely determines the electrostatic field.
In the magnetostatic case, however, we are obligated to introduce additional constants,
such as the $\{ I_m \}$ in (\ref{static_cond}), 
in order to account for the Neumann fields when the scatterer
has non-zero genus. 



\section{Scattering Theory for Decoupled Potentials} \label{dpie:sec}

We turn now to the analytic foundations of the DPIE. 
We first derive boundary value problems for the scattered
scalar and vector potentials that are completely insensitive to the genus, although they 
do depend explicitly on the number of boundary components.
After this reformulation of the Maxwell equations, we design integral
representations that lead to well-conditioned and invertible linear systems
of equations.

\begin{definition}
By the {\em scalar Dirichlet problem}, we mean the calculation of 
a scalar Helmholtz or Laplace potential in $\mathbb{R}^3\backslash D$
whose boundary value 
equals a given function $f$ on $\partial D$ and which satisfies 
standard radiation conditions at infinity:
\begin{equation}
\begin{aligned}
\Delta \phi^{\Sc} +k^2\phi^{\Sc}=0, \quad \phi^{\Sc}|_{\partial D}=f,\\
\end{aligned}
\end{equation}
\begin{equation}\label{radfiH}
\begin{array}{ll}
\frac{\bx}{|\bx|}\cdot\nabla \phi^{\Sc}(\bx)-ik\phi^{\Sc}(\bx)=o\Big(\frac{1}{|\bx|}\Big),\quad &|\bx|\rightarrow \infty\, ,
\end{array}
\end{equation}
for the scalar Helmholtz potential, and
\begin{equation}
\begin{aligned}
\Delta \phi_0^{\Sc} =0, \quad \phi_0^{\Sc}|_{\partial D}=f,\\
\end{aligned}
\end{equation}
\begin{equation}\label{radfiL}
\begin{array}{ll}
\phi^{\Sc}_0(\bx)=O\Big(\frac{1}{|\bx|}\Big),\quad 
\nabla \phi^{\Sc}_0(\bx)=O\Big(\frac{1}{|\bx|^2}\Big),\quad
&|\bx|\rightarrow \infty,
\end{array}
\end{equation}
for the scalar Laplace potential, respectively.
\end{definition}

\begin{definition}
By the {\em vector Dirichlet problem}, we mean the calculation of 
a vector Helmholtz or Laplace potential in $\mathbb{R}^3\backslash D$
whose tangential boundary
values equal a given tangential function $\mathbf{f}$ on $\partial D$, and whose
divergence equals a given scalar function $h$ on $\partial D$
and
which satisfies standard
radiation conditions at infinity:
\begin{equation}
\Delta \bA^{\Sc}+k^2\bA^{\Sc} =0, \quad
\bn\times \bA^{\Sc}|_{\partial D}=\mathbf{f}, \quad
\nabla \cdot \bA^{\Sc}|_{\partial D}=h, 
\end{equation}
\begin{equation}\label{radAH}
\begin{array}{ll}
\nabla\times \bA^{\Sc}(\bx)\times\frac{\bx}{|\bx|}+\frac{\bx}{|\bx|}\nabla\cdot \bA^{\Sc}(\bx)-ik\bA^{\Sc}(\bx)=o\Big(\frac{1}{|\bx|}\Big), \quad &|\bx|\rightarrow \infty\, ,
\end{array}
\end{equation}
for the vector Helmholtz potential, and 
\begin{equation}
\Delta \bA_0^{\Sc} =0, \quad
\bn\times \bA_0^{\Sc}|_{\partial D}=\mathbf{f}, \quad
\nabla \cdot \bA_0^{\Sc}|_{\partial D}=h, 
\end{equation}
\begin{equation}\label{radAL}
\begin{array}{ll}
\bA_0^{\Sc}(\bx)=O\Big(\frac{1}{|\bx|}\Big),\quad
\nabla\times \bA_0^{\Sc}(\bx)=O\Big(\frac{1}{|\bx|^2}\Big),\quad
\nabla\cdot \bA_0^{\Sc}(\bx)=O\Big(\frac{1}{|\bx|^2}\Big),
\quad &|\bx|\rightarrow \infty\, ,
\end{array}
\end{equation}
for the vector Laplace potential, respectively.
\end{definition}

For $k \neq 0$, both Dirichlet problems have unique solutions, but for
$k=0$, the vector Dirichlet problem has a nullspace --- the harmonic
Dirichlet fields discussed in Section \ref{electrostatic}. This lack
of uniqueness also makes the vector Dirichlet problem ill-conditioned 
at low frequencies.

\begin{table}[H] 
\centering
\begin{tabular}{c|cc}
\multicolumn{3}{c}{}\\
\hline
 Unknowns: & &  \\
 $\phi^{\Sc}$ & Laplace   & Helmholtz  \\
 $\bA^{\Sc}$ &   &   \\
\hline
\\
Scalar:  & $ \left\{ \begin{array}{r}
\Delta \phi^{\Sc} =0,\\
\phi^{\Sc}|_{\partial D}=f. 
\end{array} \right. $ (Yes) & $ \left\{ \begin{array}{r}
\Delta \phi^{\Sc} +k^2 \phi^{\Sc} =0,\\
\phi^{\Sc}|_{\partial D}=f. 
\end{array} \right. $ (Yes)\\
\\
Vector:  & $ \left\{ \begin{array}{r}
\Delta \bA^{\Sc} =0,\\
\bn\times \bA^{\Sc}|_{\partial D}=\mathbf{f},\\
\nabla \cdot \bA^{\Sc}|_{\partial D}=h.
\end{array} \right. $ (No) & $\left\{ \begin{array}{r}
\Delta \bA^{\Sc}+k^2\bA^{\Sc} =0,\\
\bn\times \bA^{\Sc}|_{\partial D}=\mathbf{f},\\
\nabla \cdot \bA^{\Sc}|_{\partial D}=h.
\end{array} \right. $ (Yes) \\
\\
\hline
\end{tabular}
\caption{Uniqueness for Dirichlet problems}
\label{tab1}
\end{table}

\subsection{Modified Dirichlet problems}

In order to address the non-uniqueness of the vector Dirichlet problem at
zero frequency and in order to enforce that the uncoupled scalar and vector
potentials define a suitable Maxwell field (enforcing the Lorenz gauge),
we introduce a related set of boundary value problems, 
which we refer to as {\em the modified Dirichlet problems}.  

\begin{definition}
By the {\em scalar modified Dirichlet problem}, we mean the calculation of 
a scalar Helmholtz or Laplace potential in $\mathbb{R}^3\backslash D$
which satisfies 
standard radiation conditions at infinity.
Letting $N$ denote the number of connected components of the boundary 
$\partial D$, we introduce extra unknown
degrees of freedom $\{V_j\}_{j=1}^N$
and the boundary data $f$ is supplemented with
additional (known) constants $\{Q_j\}_{j=1}^N$.
For the scalar Helmholtz potential, 
\begin{equation}
\begin{aligned}
\Delta \phi^{\Sc} +k^2\phi^{\Sc}=0, \quad \phi^{\Sc}|_{\partial D_j}=f+V_j,\\
\end{aligned}
\end{equation}
\begin{equation}\label{radfiHm}
\begin{array}{ll}
\frac{\bx}{|\bx|}\cdot\nabla \phi^{\Sc}(\bx)-ik\phi^{\Sc}(\bx)=o\Big(\frac{1}{|\bx|}\Big),\quad &|\bx|\rightarrow \infty\, ,
\end{array}
\end{equation}
with 
\[ \int_{\partial D_j} \frac{\partial \phi^{\Sc}}{\partial n}ds=Q_j. \]

For the scalar Laplace potential, 
\begin{equation}
\begin{aligned}
\Delta \phi_0^{\Sc} =0, \quad \phi_0^{\Sc}|_{\partial D_j}=f+V_j,\\
\end{aligned}
\end{equation}
\begin{equation}\label{radfiLm}
\begin{array}{ll}
\phi^{\Sc}_0(\bx)=O\Big(\frac{1}{|\bx|}\Big),\quad 
\nabla \phi^{\Sc}_0(\bx)=O\Big(\frac{1}{|\bx|^2}\Big),\quad
&|\bx|\rightarrow \infty,
\end{array}
\end{equation}
with 
\[ \int_{\partial D_j} \frac{\partial \phi^{\Sc}}{\partial n}ds=Q_j. \]
\end{definition}

\begin{definition}
By the {\em vector modified Dirichlet problem}, we mean the calculation of 
a vector Helmholtz or Laplace potential in $\mathbb{R}^3\backslash D$
which satisfies standard
radiation conditions at infinity.
Letting $N$ denote the number of connected components of the boundary 
$\partial D$, we introduce extra unknown
degrees of freedom $\{v_j\}_{j=1}^N$
and the boundary data $f$ is supplemented with
additional (known) constants $\{q_j\}_{j=1}^N$.
For the vector Helmholtz potential,
\begin{equation}
\Delta \bA^{\Sc}+k^2\bA^{\Sc} =0, \quad
\bn\times \bA^{\Sc}|_{\partial D}=\mathbf{f}, \quad
\nabla \cdot \bA^{\Sc}|_{\partial D_j}=h+v_j, 
\end{equation}
\begin{equation}\label{radAHm}
\begin{array}{ll}
\nabla\times \bA^{\Sc}(\bx)\times\frac{\bx}{|\bx|}+\frac{\bx}{|\bx|}\nabla\cdot \bA^{\Sc}(\bx)-ik\bA^{\Sc}(\bx)=o\Big(\frac{1}{|\bx|}\Big), \quad &|\bx|\rightarrow \infty\, ,
\end{array}
\end{equation}
with 
\[ \int_{\partial D_j} \bn\cdot \bA^{\Sc}ds=q_j. \]

For the vector Laplace potential,
\begin{equation}
\Delta \bA_0^{\Sc} =0, \quad
\bn\times \bA_0^{\Sc}|_{\partial D}=\mathbf{f}, \quad
\nabla \cdot \bA_0^{\Sc}|_{\partial D_j}=h+v_j, 
\end{equation}
\begin{equation}\label{radALm}
\begin{array}{ll}
\bA_0^{\Sc}(\bx)=O\Big(\frac{1}{|\bx|}\Big),\quad
\nabla\times \bA_0^{\Sc}(\bx)=O\Big(\frac{1}{|\bx|^2}\Big),\quad
\nabla\cdot \bA_0^{\Sc}(\bx)=O\Big(\frac{1}{|\bx|^2}\Big),
\quad &|\bx|\rightarrow \infty\, ,
\end{array}
\end{equation}
with 
\[ \int_{\partial D_j} \bn\cdot \bA^{\Sc}ds=q_j. \]
\end{definition}

We summarize the modified Dirichlet boundary value problems in Table \ref{tab2}.

\begin{table}[H] 
\centering
\begin{tabular}{c|cc}
\multicolumn{3}{c}{}\\
\hline
 Unknowns: & & \\
 $\phi^{\Sc}$, $\{V_j\}_{j=1}^N$&  Laplace  & Helmholtz  \\
 $\bA^{\Sc}$, $\{v_j\}_{j=1}^N$ &   &   \\
\hline
\\
Scalar:  & $\left\{ \begin{array}{r}
\Delta \phi^{\Sc} =0,\\
\phi^{\Sc}|_{\partial D_j}=f+V_j,\\
\int_{\partial D_j} \frac{\partial \phi^{\Sc}}{\partial n}ds=Q_j.
\end{array} \right.$ (Yes) & $\left\{ \begin{array}{r}
\Delta \phi^{\Sc} +k^2 \phi^{\Sc} =0,\\
\phi^{\Sc}|_{\partial D_j}=f +V_j,\\
\int_{\partial D_j} \frac{\partial \phi^{\Sc}}{\partial n}ds=Q_j.
\end{array} \right.$ (Yes) \\
\\
Vector:  & $\left\{ \begin{array}{r}
\Delta \bA^{\Sc} =0,\\
\bn\times \bA^{\Sc}|_{\partial D}=\mathbf{f},\\
\nabla \cdot \bA^{\Sc}|_{\partial D_j}=h+v_j,\\
\int_{\partial D_j} \bn\cdot \bA^{\Sc}ds=q_j.
\end{array} \right.$ (Yes) & $\left\{ \begin{array}{r}
\Delta \bA^{\Sc}+k^2\bA^{\Sc} =0,\\
\bn\times \bA^{\Sc}|_{\partial D}=\mathbf{f},\\
\nabla \cdot \bA^{\Sc}|_{\partial D_j}=h+v_j,\\
\int_{\partial D_j} \bn\cdot \bA^{\Sc}ds=q_j.
\end{array} \right.$ (Yes) \\
\\
\hline
\end{tabular}
\caption{Uniqueness for modified Dirichlet problems}
\label{tab2}
\end{table}

We now define the scattered scalar and vector potentials in terms of
modified Dirichlet problems.

\begin{definition} \label{defscatpot}
Let $\phi^{\In}, \bA^{\In}$ denote incoming scalar and vector 
potentials and assume that $D$ is a perfect conductor.
The {\em scattered scalar potential} $\phi^{\Sc}$ is the solution to the 
scalar modified Dirichlet problem with boundary data:
\begin{equation}
\begin{array}{cc}
f:=-\phi^{\In}|_{\partial D_j},& Q_j:=-\int_{\partial D_j}\frac{\phi^{\In}}{\partial n}ds 
\, .
\end{array}
\end{equation}
Likewise, the \emph{scattered vector potential} $\bA^{\Sc}$ is the solution to the 
vector modified Dirichlet problem with boundary data:
\begin{equation}
\begin{array}{ccc}
\mathbf{f}:=-\bn\times \bA^{\In}|_{\partial D},&h:=-\nabla\cdot \bA^{\In}|_{\partial D},& q_j:=-\int_{\partial D_j}\bn\cdot \bA^{\In}ds \, .
\end{array}
\end{equation}
\end{definition}

\subsection{Uniqueness}

We begin with two well-known theorems from scattering theory.

\begin{thm}\label{I_s}  \cite{CK1}
Let $\phi^{\Sc}$ be a scalar Helmholtz potential with wavenumber $k$, $(k\ne 0)$
in the exterior domain $\mathbb{R}^3\backslash \overline{D}$, 
satisfying the radiation condition (\ref{radfiH}) and the condition
\begin{equation}\label{I_s_eq}
I_s=\Im \bigg\{ k\int_{\partial D} \phi^{\Sc} \frac{\partial \overline{\phi}^{\Sc}}{\partial n}ds\bigg\}\ge 0.
\end{equation}
Here, $\Im \{f\}$ denotes the imaginary part of $f$.
Then, $\phi^{\Sc}=0$ in $\mathbb{R}^3\backslash \overline{D}$.
\end{thm}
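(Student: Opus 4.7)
\medskip

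My plan is the standard Rellich--Sommerfeld argument for the exterior Helmholtz problem, which converts the assumed sign condition $I_s \ge 0$ into a nonpositive upper bound for a manifestly nonnegative quantity on a large sphere, forcing $\phi^{\Sc}$ to decay in mean square and hence to vanish.

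First I would fix a sphere $S_R$ of radius $R$ large enough to contain $\overline{D}$ and apply Green's second identity to $\phi^{\Sc}$ and $\overline{\phi}^{\Sc}$ in the annular region bounded by $\partial D$ (with outward normal $\bn$ pointing into $\mathbb{R}^3\backslash\overline{D}$) and $S_R$. Since $k$ is real, both functions satisfy the same Helmholtz equation, so the volume term vanishes and, after taking imaginary parts (which annihilates $\overline{\phi}^{\Sc}\partial_n\phi^{\Sc} + \phi^{\Sc}\partial_n\overline{\phi}^{\Sc}$ contributions), one obtains the conservation identity
\begin{equation*}
\Im\!\int_{S_R}\phi^{\Sc}\,\overline{\partial_r\phi^{\Sc}}\,ds \;=\; \Im\!\int_{\partial D}\phi^{\Sc}\,\overline{\partial_n\phi^{\Sc}}\,ds \;=\; \frac{I_s}{k}.
\end{equation*}

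Next I would exploit the Sommerfeld radiation condition (\ref{radfiH}) by expanding
\begin{equation*}
\bigl|\partial_r\phi^{\Sc}-ik\phi^{\Sc}\bigr|^2 \;=\; |\partial_r\phi^{\Sc}|^2 + k^2|\phi^{\Sc}|^2 + 2k\,\Im\bigl(\phi^{\Sc}\,\overline{\partial_r\phi^{\Sc}}\bigr).
\end{equation*}
Integrating over $S_R$, the left-hand side tends to $0$ as $R\to\infty$ by (\ref{radfiH}); substituting the conservation identity from the previous step yields
\begin{equation*}
\int_{S_R}\bigl(|\partial_r\phi^{\Sc}|^2 + k^2|\phi^{\Sc}|^2\bigr)\,ds \;=\; -2\,I_s + o(1).
\end{equation*}
Because the left-hand side is nonnegative while the hypothesis gives $I_s\ge 0$, we conclude $I_s=0$ and, in particular, $\int_{S_R}|\phi^{\Sc}|^2\,ds\to 0$ as $R\to\infty$.

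Finally I would invoke Rellich's lemma: an outgoing Helmholtz solution whose $L^2$ norm on $S_R$ vanishes in the limit must be identically zero outside some ball, and then unique continuation for the elliptic operator $\Delta+k^2$ across the connected exterior domain propagates this to $\phi^{\Sc}\equiv 0$ in all of $\mathbb{R}^3\backslash\overline{D}$. The main bookkeeping obstacle is simply keeping the orientation of the normal on $\partial D$ consistent with the exterior domain throughout Green's identity, and (implicitly) the assumption that $k$ is real and positive so that the Helmholtz operator has real coefficients --- this is the standard setting for lossless time-harmonic scattering and is consistent with the definition $k=\omega\sqrt{\epsilon\mu}$ used earlier.
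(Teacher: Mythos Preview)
Your argument is correct: it is exactly the standard Rellich--Sommerfeld proof found in Colton--Kress \cite{CK1}, which is precisely what the paper invokes (the paper does not give its own proof of this theorem but simply cites \cite{CK1}). The only minor remark is that the paper states the result for $k\ne 0$ while your write-up tacitly uses $k>0$ real; this is consistent with the paper's standing assumption that $k$ is real, and the case $k<0$ follows by the obvious sign adjustments in the radiation condition and in the expansion of $|\partial_r\phi^{\Sc}-ik\phi^{\Sc}|^2$.
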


\begin{thm}\label{I_v} \cite{CK1}
Let $\bA^{\Sc}$ be a vector Helmholtz potential with wavenumber $k$, $(k\ne 0)$ in the exterior 
domain $\mathbb{R}^3\backslash \overline{D}$,
satisfying the radiation condition (\ref{radAH}) and the condition
\begin{equation}\label{I_v_eq}
I_v=\Im \bigg\{ k\int_{\partial D} \bn\times \bA^{\Sc} \cdot \nabla \times \overline{\bA}^{\Sc}+\bn\cdot \bA^{\Sc}\nabla \cdot \overline{\bA}^{\Sc}ds\bigg\}\ge 0 \, .
\end{equation}
Then, $\bA^{\Sc}=0$ in $\mathbb{R}^3\backslash \overline{D}$.
\end{thm}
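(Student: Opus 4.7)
The strategy parallels the scalar proof (Theorem \ref{I_s}): apply the radiation condition on a large sphere $S_R$, expand the square, invoke a vector Green's identity whose volume contribution is real, match the remaining surface term with the boundary integral $I_v$, and finish via componentwise Rellich.

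Let $\hat r = \bx/|\bx|$ and set
\begin{equation*}
\boldsymbol{P} := \nabla\times\bA^{\Sc}\times\hat r + \hat r\,(\nabla\cdot\bA^{\Sc}) - ik\,\bA^{\Sc}, \qquad \boldsymbol{T} := \boldsymbol{P} + ik\,\bA^{\Sc}.
\end{equation*}
By (\ref{radAH}), $\boldsymbol{P}=o(1/|\bx|)$ uniformly, so $\int_{S_R}|\boldsymbol{P}|^2\,ds\to 0$ as $R\to\infty$. Since $(\nabla\times\bA^{\Sc})\times\hat r \perp \hat r$, the identity $|\boldsymbol{P}|^2=|\boldsymbol{T}|^2+k^2|\bA^{\Sc}|^2-2k\,\Im(\boldsymbol{T}\cdot\overline{\bA}^{\Sc})$ yields
\begin{equation*}
\int_{S_R}\bigl(|\nabla\times\bA^{\Sc}\times\hat r|^2 + |\nabla\cdot\bA^{\Sc}|^2 + k^2|\bA^{\Sc}|^2\bigr)\,ds = \int_{S_R}|\boldsymbol{P}|^2\,ds + 2k\,\Im\!\int_{S_R}\boldsymbol{T}\cdot\overline{\bA}^{\Sc}\,ds.
\end{equation*}

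The key identity, obtained from the standard vector formulas $\nabla\cdot(\overline{\bA}\times\nabla\times\bA)=|\nabla\times\bA|^2-\overline{\bA}\cdot\nabla\times\nabla\times\bA$ and $\nabla\cdot(\overline{\bA}\,\nabla\cdot\bA)=|\nabla\cdot\bA|^2+\overline{\bA}\cdot\nabla(\nabla\cdot\bA)$, combined with $\nabla\times\nabla\times\bA^{\Sc}=\nabla(\nabla\cdot\bA^{\Sc})+k^2\bA^{\Sc}$, is
\begin{equation*}
\nabla\cdot\bigl(\overline{\bA}^{\Sc}\times\nabla\times\bA^{\Sc} + \overline{\bA}^{\Sc}\,\nabla\cdot\bA^{\Sc}\bigr) = |\nabla\times\bA^{\Sc}|^2 + |\nabla\cdot\bA^{\Sc}|^2 - k^2|\bA^{\Sc}|^2.
\end{equation*}
The right-hand side is real, so by the divergence theorem on $\Omega_R = B_R\setminus\overline{D}$, $\Im$ of the flux through $S_R$ equals $\Im$ of the flux through $\partial D$. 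On $S_R$ the cyclic identity $(a\times b)\cdot c=a\cdot(b\times c)$ rewrites the flux as $\boldsymbol{T}\cdot\overline{\bA}^{\Sc}$; on $\partial D$ the identity $\Im(z)=-\Im(\overline z)$ turns the flux into the integrand of (\ref{I_v_eq}) with a minus sign, giving $k\,\Im\!\int_{S_R}\boldsymbol{T}\cdot\overline{\bA}^{\Sc}\,ds = -I_v$. Substituting and letting $R\to\infty$, the nonnegative left-hand side tends to $-2I_v$, forcing $I_v\le 0$; together with the hypothesis $I_v\ge 0$ we conclude $I_v=0$ and $\int_{S_R}|\bA^{\Sc}|^2\,ds\to 0$. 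Since each Cartesian component of $\bA^{\Sc}$ solves the scalar Helmholtz equation in $\mathbb{R}^3\setminus\overline{D}$, Rellich's lemma applied componentwise gives $\bA^{\Sc}\equiv 0$.

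The only delicate step is the sign accounting in the Green's identity: one must verify that the volume integrand is genuinely real (so that it drops out after taking $\Im$) and that the cyclic rearrangement on $S_R$ together with the conjugation on $\partial D$, using the outward-normal convention of (\ref{I_v_eq}), combine to produce $-I_v$ rather than $+I_v$. An off-by-a-sign error at this point would only reproduce the hypothesis $I_v\ge 0$ and yield no conclusion; everything else is a routine Rellich-style estimate.
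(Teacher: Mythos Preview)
The paper does not supply its own proof of Theorem~\ref{I_v}; the statement is quoted from Colton--Kress \cite{CK1} and used as a black box. Your argument is the standard Rellich-type proof one finds in that reference: expand the radiation condition on $S_R$, use the vector Green identity (whose volume contribution $|\nabla\times\bA^{\Sc}|^2+|\nabla\cdot\bA^{\Sc}|^2-k^2|\bA^{\Sc}|^2$ is real when $k$ is real, as the paper assumes just before Theorem~\ref{uniq_S}), match the boundary terms, and finish with componentwise Rellich. The sign bookkeeping you flag as delicate is handled correctly: the flux through $\partial D$ is the complex conjugate of the integrand in (\ref{I_v_eq}), so taking $\Im$ flips the sign and yields $-I_v$, forcing $I_v\le 0$ and hence $\int_{S_R}|\bA^{\Sc}|^2\,ds\to 0$. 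Your proof is correct and is precisely the argument the citation points to.
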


We now show that the modified Dirichlet problems have unique solutions
in all regimes.  For simplicity, we assume that $k$ is real. The proofs
are analogous when the wavenumber $k$ has a positive imaginary part,
which adds dissipation.

\begin{thm}\label{uniq_S}
The scalar modified Dirichlet problem has at most one solution for any $k>0$.
\end{thm}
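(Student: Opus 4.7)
The plan is a standard difference-of-solutions argument that feeds into Theorem \ref{I_s}. Suppose $\phi_1^{\Sc}, \phi_2^{\Sc}$ are two solutions of the scalar modified Dirichlet problem corresponding to the same boundary data $f$ and the same prescribed fluxes $\{Q_j\}_{j=1}^N$, with their own auxiliary constants $\{V_j^{(1)}\}$ and $\{V_j^{(2)}\}$. Let $\phi := \phi_1^{\Sc}-\phi_2^{\Sc}$ and $W_j := V_j^{(1)}-V_j^{(2)}$. Then $\phi$ solves the homogeneous Helmholtz equation in $\mathbb{R}^3\backslash\overline{D}$, satisfies the radiation condition \eqref{radfiHm}, takes the constant value $W_j$ on each connected component $\partial D_j$, and has vanishing total normal flux $\int_{\partial D_j}\partial_n\phi\,ds=0$ for every $j$.

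The next step is to check the hypothesis of Theorem \ref{I_s}. Writing
\begin{equation*}
\int_{\partial D} \phi\,\frac{\partial \overline{\phi}}{\partial n}\,ds = \sum_{j=1}^{N} \int_{\partial D_j} \phi\,\frac{\partial \overline{\phi}}{\partial n}\,ds = \sum_{j=1}^{N} W_j \overline{\int_{\partial D_j} \frac{\partial \phi}{\partial n}\,ds} = 0,
\end{equation*}
since $\phi$ is constant on each $\partial D_j$ and the fluxes vanish. Consequently $I_s = 0$, the hypothesis \eqref{I_s_eq} is satisfied (with equality), and Theorem \ref{I_s} yields $\phi\equiv 0$ in $\mathbb{R}^3\backslash\overline{D}$.

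From $\phi\equiv 0$ we obtain $\phi_1^{\Sc}=\phi_2^{\Sc}$, and then taking the trace on any $\partial D_j$ forces $W_j=0$, i.e.\ $V_j^{(1)}=V_j^{(2)}$. Hence both the Helmholtz potential and the auxiliary constants are uniquely determined, which is precisely the claim.

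There is no real obstacle here; the only subtlety worth flagging is that the uniqueness theorem must be applied to the \emph{pair} $(\phi^{\Sc},\{V_j\})$ rather than to $\phi^{\Sc}$ alone, and the reason the scheme works is exactly the structural compatibility between the extra unknowns $V_j$ (making the Dirichlet trace piecewise constant on the difference) and the extra constraints $Q_j$ (making the associated boundary flux vanish), so that the quadratic form $I_s$ collapses component-by-component to zero. The Laplace case $k=0$ is not covered by Theorem \ref{I_s} and would require a separate argument (e.g.\ Green's identity together with the decay rates in \eqref{radfiLm}), but the statement of Theorem \ref{uniq_S} is restricted to $k>0$, so the argument above suffices.
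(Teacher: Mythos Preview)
Your argument is correct and essentially identical to the paper's: both reduce to the homogeneous data case (you via an explicit difference of two solutions, the paper by directly positing a solution with $f=0$, $Q_j=0$), compute $I_s=\Im\{k\sum_j V_j\overline{\int_{\partial D_j}\partial_n\phi^{\Sc}\,ds}\}=0$ from the piecewise-constant trace and vanishing fluxes, invoke Theorem~\ref{I_s} to conclude $\phi^{\Sc}\equiv 0$, and then read off $V_j=0$ from the boundary condition. Your closing remarks on the $(\phi^{\Sc},\{V_j\})$ pairing and the $k=0$ case are accurate and consistent with how the paper handles the Laplace situation separately in Theorem~\ref{uniq_SLS}.
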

\begin{proof}
Consider a solution of the homogeneous problem $(f=0,Q_j=0)$:
\begin{equation}
\left\{ \begin{array}{c}
\Delta \phi^{\Sc}+k^2\phi^{\Sc} =0,\\
\phi^{\Sc}|_{\partial D}=0+V_j,\\
\int_{\partial D_j} \frac{\partial \phi^{\Sc}}{\partial n}ds=0.
\end{array} \right.  
\end{equation}
The quantity $I_s$ in (\ref{I_s_eq}) is then given by
\begin{equation}
I_s=\Im \bigg\{ k\int_{\partial D} \phi^{\Sc} \frac{\partial \overline{\phi}^{\Sc}}{\partial n}ds\bigg\}=\Im \bigg\{ k\sum_{j=1}^N V_j \int_{\partial D_j}\frac{\partial \overline{\phi}^{\Sc}}{\partial n}\bigg\}=0.
\end{equation}
Thus, by Theorem \ref{I_s}, $\phi^{\Sc}=0$ and using the boundary condition $\phi^{\Sc}|_{\partial D_j}=0+V_j,$ we get $V_j=0$.
\end{proof}

\begin{thm}\label{uniq_V}
The vector modified Dirichlet problem has at most one solution for any $k>0$.
\end{thm}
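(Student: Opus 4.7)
The plan is to mirror the structure of the proof of Theorem \ref{uniq_S}, applying the vector version of the Rellich-type identity (Theorem \ref{I_v}) to the homogeneous modified Dirichlet problem. Consider a solution $\bA^{\Sc}$ of the homogeneous problem, i.e.\ with $\mathbf{f}=\mathbf{0}$, $h=0$, and $q_j=0$ for $j=1,\dots,N$. Then the boundary data reduce to
\[
\bn\times\bA^{\Sc}|_{\partial D}=\mathbf{0},\qquad
\nabla\cdot\bA^{\Sc}|_{\partial D_j}=v_j,\qquad
\int_{\partial D_j}\bn\cdot\bA^{\Sc}\,ds=0.
\]

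Next, I would insert these boundary conditions into the surface integral in (\ref{I_v_eq}). The first term, $\int_{\partial D}\bn\times\bA^{\Sc}\cdot\nabla\times\overline{\bA}^{\Sc}\,ds$, vanishes identically because $\bn\times\bA^{\Sc}=\mathbf{0}$ on $\partial D$. For the second term, the key observation is that $\nabla\cdot\overline{\bA}^{\Sc}$ is constant on each connected component $\partial D_j$, equal to $\overline{v_j}$, so it can be pulled outside the integration over that component:
\[
\int_{\partial D}\bn\cdot\bA^{\Sc}\,\nabla\cdot\overline{\bA}^{\Sc}\,ds
=\sum_{j=1}^{N}\overline{v_j}\int_{\partial D_j}\bn\cdot\bA^{\Sc}\,ds=0,
\]
where the last equality uses $q_j=0$. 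Hence $I_v=0\ge 0$ trivially, and Theorem \ref{I_v} yields $\bA^{\Sc}\equiv\mathbf{0}$ in $\mathbb{R}^3\setminus\overline{D}$.

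Finally, once $\bA^{\Sc}=\mathbf{0}$, the divergence boundary condition $\nabla\cdot\bA^{\Sc}|_{\partial D_j}=v_j$ forces $v_j=0$ for each $j$, so the homogeneous problem has only the trivial solution. By linearity, this gives uniqueness for the inhomogeneous vector modified Dirichlet problem.

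The main obstacle, such as it is, is the algebraic one of verifying that the extra unknown constants $v_j$ are precisely the degrees of freedom required to cancel against the extra constraints $q_j$ inside $I_v$; this is why the pairing works so cleanly. The underlying analytic content (the Rellich-type identity and unique continuation) is fully packaged in Theorem \ref{I_v}, so no further PDE analysis is needed. The only subtlety worth flagging is that $\nabla\cdot\bA^{\Sc}$ must have a well-defined trace on $\partial D$; since $\nabla\cdot\bA^{\Sc}$ itself satisfies the Helmholtz equation in the exterior (as a consequence of $\Delta\bA^{\Sc}+k^2\bA^{\Sc}=\mathbf{0}$) and is prescribed to equal $v_j$ on $\partial D_j$, this trace is well-defined and the constant-on-each-component argument is rigorous.
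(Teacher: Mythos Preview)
Your proof is correct and follows essentially the same approach as the paper: both insert the homogeneous boundary data into the integral $I_v$ of Theorem~\ref{I_v}, use that $\bn\times\bA^{\Sc}=\mathbf{0}$ kills the first term and that $\nabla\cdot\overline{\bA}^{\Sc}=\overline{v_j}$ is constant on each $\partial D_j$ to reduce the second term to $\sum_j\overline{v_j}\int_{\partial D_j}\bn\cdot\bA^{\Sc}\,ds=0$, and then conclude $\bA^{\Sc}=\mathbf{0}$ and $v_j=0$.
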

\begin{proof}
Consider a solution of the homogeneous problem $(\mathbf{f}=0,h=0,q_j=0)$:
\begin{equation}
\left\{ \begin{array}{c}
\Delta \bA^{\Sc}+k^2\bA^{\Sc} =0,\\
\bn\times \bA^{\Sc}|_{\partial D}=\mathbf{0},\\
\nabla \cdot \bA^{\Sc}|_{\partial D_j}=0+v_j,\\
\int_{\partial D_j} \bn\cdot \bA^{\Sc}ds=0.
\end{array} \right. 
\end{equation}
The quantity $I_v$ in \ref{I_v_eq} is then given by
\begin{equation}
\begin{aligned}
I_v&=\Im \bigg\{ k\int_{\partial D} \bn\times \bA^{\Sc} \cdot \nabla \times \overline{\bA}^{\Sc}+\bn\cdot \bA^{\Sc}\nabla \cdot \overline{\bA}^{\Sc}ds\bigg\}=\\
&=\Im \bigg\{ k\sum_{j=1}^N \overline{v_j}\int_{\partial D_j} \bn\cdot \bA^{\Sc}ds\bigg\}=0.
\end{aligned}
\end{equation}
Thus, by Theorem \ref{I_v}, $\mathbf{A}^{\Sc}=0$ and using the boundary condition $\nabla \cdot \bA^{\Sc}|_{\partial D_j}=0+v_j$, we obtain $v_j=0$.
\end{proof}

\begin{thm}\label{uniq_SLS}
The scalar modified Dirichlet problem for the Laplace equation
has at most one solution.
\end{thm}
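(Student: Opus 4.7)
The plan is to mimic the structure of the proof of Theorem \ref{uniq_S}, but using a real energy (Green's first identity) argument rather than the imaginary-part identity of Theorem \ref{I_s}, since at $k=0$ the latter is vacuous and we need a different mechanism to produce a sign-definite quantity. As in the Helmholtz case, I would consider the homogeneous problem $(f=0,\,Q_j=0)$ and show that any solution $\phi_0^{\Sc}$, together with the auxiliary constants $\{V_j\}$, must vanish identically.

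First I would apply Green's first identity to $\phi_0^{\Sc}$ on the domain $B_R \cap (\mathbb{R}^3\setminus\overline{D})$, where $B_R$ is a large ball of radius $R$. Using $\Delta \phi_0^{\Sc}=0$, this gives
\begin{equation*}
\int_{B_R\cap(\mathbb{R}^3\setminus\overline{D})} |\nabla \phi_0^{\Sc}|^2\, dV
 = -\int_{\partial D} \overline{\phi_0^{\Sc}}\,\frac{\partial \phi_0^{\Sc}}{\partial n}\,ds
 + \int_{|\bx|=R} \overline{\phi_0^{\Sc}}\,\frac{\partial \phi_0^{\Sc}}{\partial n}\,ds,
\end{equation*}
where $\bn$ denotes the outward normal to $D$ (so the sign flips on $\partial D$ relative to the exterior normal). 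The decay estimates $\phi_0^{\Sc}=O(1/|\bx|)$ and $\nabla \phi_0^{\Sc}=O(1/|\bx|^2)$ from \eqref{radfiLm} make the integrand on $|\bx|=R$ of order $O(1/R^3)$, so that surface integral is $O(1/R)$ and vanishes as $R\to\infty$.

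Next I would exploit the boundary condition $\phi_0^{\Sc}|_{\partial D_j}=V_j$ to pull the constants out of the boundary integral:
\begin{equation*}
\int_{\partial D} \overline{\phi_0^{\Sc}}\,\frac{\partial \phi_0^{\Sc}}{\partial n}\,ds
 = \sum_{j=1}^N \overline{V_j} \int_{\partial D_j} \frac{\partial \phi_0^{\Sc}}{\partial n}\,ds
 = \sum_{j=1}^N \overline{V_j}\, Q_j = 0,
\end{equation*}
since $Q_j=0$ for the homogeneous problem. Letting $R\to\infty$ in the energy identity thus yields
\begin{equation*}
\int_{\mathbb{R}^3\setminus\overline{D}} |\nabla \phi_0^{\Sc}|^2\, dV = 0,
\end{equation*}
so $\phi_0^{\Sc}$ is locally constant on the (connected) exterior domain. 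The decay $\phi_0^{\Sc}=O(1/|\bx|)$ forces this constant to be zero, and then the boundary condition $\phi_0^{\Sc}|_{\partial D_j}=V_j$ gives $V_j=0$ for all $j$.

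The only subtlety that requires care is the vanishing of the surface term at $|\bx|=R$; this needs the full strength of both decay rates in \eqref{radfiLm}, not just $\phi_0^{\Sc}=O(1/|\bx|)$. Once that is in hand, the argument is a direct Laplace-equation analogue of Theorem \ref{uniq_S}, with the pair $(\text{decay at infinity},\,Q_j=0)$ replacing the role played by the radiation condition and $Q_j=0$ in the Helmholtz case.
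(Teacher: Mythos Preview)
Your proof is correct and complete. It differs from the paper's argument: the paper observes that, for $f=0$, the map $\{V_j\}\mapsto\{Q_j\}$ is the capacitance matrix, cites its invertibility (Theorem~5.6 in \cite{CK1}) to conclude $V_j=0$, and then invokes the maximum principle to get $\phi_0^{\Sc}=0$. Your route is an energy argument via Green's first identity, which kills the boundary term using $Q_j=0$ and the far-field term using the decay \eqref{radfiLm}, yielding $\nabla\phi_0^{\Sc}=0$ directly and hence $\phi_0^{\Sc}=0$, $V_j=0$. Your argument is more self-contained (it does not rely on an external result about the capacitance matrix) and in fact essentially \emph{reproves} that invertibility in the course of the uniqueness proof; the paper's version is shorter but leans on a citation for the key step. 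Both approaches are standard and equally valid here.
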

\begin{proof}
This is a well-known result. When $f=0$, the relation between $Q_j$ and $V_j$ 
is the capacitance matrix \cite{JACKSON}
and this matrix is always invertible (Theorem 5.6 in \cite{CK1}).
Thus, if $Q_j=0$, then $V_j=0$. The fact that $\phi^{\Sc}=0$ follows
from the maximum principle \cite{Evans}.
\end{proof}
Before proving the uniqueness of the vector modified Dirichlet problem for the Laplace equation we need the following technical result that shows a relation between the vector and scalar modified Dirichlet problems. This Lemma will also be used in section 4 to prove the connection between electromagnetic scattering and modified Dirichlet problems.

\begin{lemma}\label{th-divA}
Let $\bA^{\Sc},\{v_j\}_{j=1}^N$ be a solution of the vector modified Dirichlet
problem with boundary data $\mathbf{f},h,\{q_j\}_{j=1}^N$ for $(k\ge 0)$. Then,
\begin{equation}\label{conserv_L}
\begin{array}{cc}
\psi^{\Sc}:=\nabla\cdot \bA^{\Sc},& \{V_j=v_j\}_{j=1}^N
\end{array}
\end{equation}
satisfies the scalar modified Dirichlet problem with boundary data:
\begin{equation}
\begin{array}{cc}
f:=h,& \{Q_j=-k^2q_j\}_{j=1}^N \, .
\end{array}
\end{equation}
\end{lemma}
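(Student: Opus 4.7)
The plan is to verify one by one each defining property of the scalar modified Dirichlet problem for $\psi^{\Sc} := \nabla \cdot \bA^{\Sc}$ with constants $V_j := v_j$, reading off the right-hand sides as claimed.

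First I would handle the PDE and boundary data, which are nearly automatic. Since the Laplacian commutes with $\nabla\cdot$ componentwise, applying $\nabla\cdot$ to $\Delta \bA^{\Sc}+k^2\bA^{\Sc}=0$ immediately gives $\Delta \psi^{\Sc}+k^2 \psi^{\Sc}=0$ in $\mathbb{R}^3\backslash D$. The boundary condition $\nabla\cdot \bA^{\Sc}|_{\partial D_j}=h+v_j$ from the vector modified Dirichlet problem is exactly $\psi^{\Sc}|_{\partial D_j}=f+V_j$ once we set $f:=h$ and $V_j:=v_j$.

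Next I would address the radiation condition. For $k>0$, each Cartesian component of $\bA^{\Sc}$ is a radiating scalar Helmholtz solution (this follows from the vector radiation condition \eqref{radAH} applied componentwise), and derivatives of radiating Helmholtz solutions are again radiating. Writing $\psi^{\Sc}=\partial_1 A^{\Sc}_1+\partial_2 A^{\Sc}_2+\partial_3 A^{\Sc}_3$ thus shows $\psi^{\Sc}$ satisfies \eqref{radfiHm}. For $k=0$, the stated decay $\bA^{\Sc}_0=O(1/|\bx|)$ with $\nabla\cdot \bA^{\Sc}_0=O(1/|\bx|^2)$ in \eqref{radALm} already gives $\psi^{\Sc}=O(1/|\bx|)$; a further differentiation using the identity below yields $\nabla\psi^{\Sc}=O(1/|\bx|^2)$, matching \eqref{radfiLm}.

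The main step, and the only one requiring real work, is the integral constraint: I must show that
\begin{equation*}
\int_{\partial D_j}\frac{\partial \psi^{\Sc}}{\partial n}\,ds \;=\; -k^2 q_j.
\end{equation*}
The key identity is $\nabla(\nabla\cdot \bA^{\Sc})=\Delta \bA^{\Sc}+\nabla\times(\nabla\times \bA^{\Sc})=-k^2\bA^{\Sc}+\nabla\times(\nabla\times \bA^{\Sc})$. Taking $\bn\cdot$ and integrating over the connected component $\partial D_j$,
\begin{equation*}
\int_{\partial D_j}\frac{\partial \psi^{\Sc}}{\partial n}\,ds
= -k^2\int_{\partial D_j}\bn\cdot \bA^{\Sc}\,ds+\int_{\partial D_j}\bn\cdot\bigl(\nabla\times(\nabla\times \bA^{\Sc})\bigr)\,ds.
\end{equation*}
The first term equals $-k^2 q_j$ by the constraint in the vector modified Dirichlet problem. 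For the second term, I would invoke the standard fact that the normal component of a curl, integrated over a closed surface, vanishes (since $\bn\cdot\nabla\times \bF$ is a surface divergence of a tangential field, and the surface divergence theorem on the boundaryless $\partial D_j$ gives zero). This closes the identification, and also shows in the Laplace case $(k=0)$ that $Q_j=0$, consistent with $-k^2 q_j=0$.

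The main obstacle is nothing conceptually deep but the bookkeeping in that last step: one has to recognize the vector identity that trades $\nabla(\nabla\cdot)$ for $\nabla\times(\nabla\times)\, {+}\, \Delta$ precisely so that the $\bn\cdot \bA^{\Sc}$ flux (which is what the vector problem constrains) reappears, and then invoke closedness of $\partial D_j$ to kill the curl term; the radiation-condition bookkeeping for $\psi^{\Sc}$ is routine by comparison.
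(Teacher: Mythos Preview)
Your proof is correct and follows essentially the same route as the paper: both use the vector identity $\nabla\times\nabla\times\bA=\nabla(\nabla\cdot\bA)-\Delta\bA$ together with the Helmholtz equation to express $\partial\psi^{\Sc}/\partial n$ in terms of $\bn\cdot\bA^{\Sc}$ plus a normal-component-of-curl term that integrates to zero over the closed component $\partial D_j$. You additionally verify the radiation condition for $\psi^{\Sc}$, which the paper's proof leaves implicit.
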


\begin{proof}
By hypothesis, $\bA^{\Sc}$ satisfies
\begin{equation}\label{helm}
\begin{aligned}
\Delta \bA^{\Sc}+k^2\bA^{\Sc} =0\\
\end{aligned}
\end{equation}
\begin{equation}
\begin{aligned}
\bn\times \bA^{\Sc}|_{\partial D}=\mathbf{f}\\
\end{aligned}
\end{equation}
\begin{equation}\label{divAbound}
\begin{aligned}
\nabla \cdot \bA^{\Sc}|_{\partial D_j}=h+v_j\\
\end{aligned}
\end{equation}
\begin{equation}\label{fluxth}
\begin{aligned}
\int_{\partial D_j} \bn\cdot \bA^{\Sc}ds=q_j \, .
\end{aligned}
\end{equation}
Taking the divergence of (\ref{helm}), we get
\begin{equation}
\begin{aligned}
\Delta \nabla\cdot \bA^{\Sc}+k^2\nabla\cdot \bA^{\Sc}=0\, .
\end{aligned}
\end{equation}
Therefore, $\psi^{\Sc}=\nabla\cdot \bA^{\Sc}$ satisfies the Helmholtz equation.
From \ref{divAbound}, we get
\begin{equation}
\begin{aligned}
\psi^{\Sc}=\nabla \cdot \bA^{\Sc}=h+v_j|_{\partial D_j}\, .
\end{aligned}
\end{equation}
Finally, we may write
\begin{equation}
\begin{aligned}
\nabla\times\nabla\times \bA^{\Sc}&=k^2\bA^{\Sc}+\nabla\nabla\cdot \bA^{\Sc}\Rightarrow\\
\bn\cdot\nabla\times\nabla\times \bA^{\Sc}&=k^2\bn\cdot\bA^{\Sc}+\bn\cdot\nabla\nabla\cdot \bA^{\Sc}\Rightarrow\\
-\nabla_s\cdot (\bn\times\nabla\times \bA^{\Sc})&=k^2\bn\cdot\bA^{\Sc}+\bn\cdot\nabla\frac{\partial \psi^{\Sc}}{\partial n}\Rightarrow\\
-\int_{\partial D_j}\nabla_s\cdot (\bn\times\nabla\times \bA^{\Sc}) ds=0&=k^2\int_{\partial D_j}\bn\cdot \bA^{\Sc} ds+\int_{\partial D_j}\frac{\partial\psi^{\Sc}}{\partial n} ds \, .
\end{aligned}
\end{equation}
Using the boundary condition (\ref{fluxth}), we obtain
\begin{equation}
\begin{aligned}
\int_{\partial D_j}\frac{\partial\psi^{\Sc}}{\partial n} ds=-k^2q_j\, ,
\end{aligned}
\end{equation}
and the result follows.
\end{proof}

\begin{thm}
The vector modified Dirichlet problem for the Laplace equation has at most one solution.
\end{thm}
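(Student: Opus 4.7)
The strategy is to reduce the uniqueness of the vector modified Dirichlet Laplace problem to the already established scalar uniqueness (Theorem \ref{uniq_SLS}), in two stages. Consider the homogeneous problem $\mathbf{f}=0$, $h=0$, $q_j=0$ and let $(\bA^{\Sc},\{v_j\})$ be any solution. First, I would invoke Lemma \ref{th-divA} with $k=0$: the divergence $\psi^{\Sc}:=\nabla\cdot\bA^{\Sc}$ together with the constants $V_j=v_j$ solves the scalar modified Dirichlet Laplace problem with data $f=h=0$ and $Q_j=-k^2 q_j=0$. Theorem \ref{uniq_SLS} immediately gives $\nabla\cdot\bA^{\Sc}\equiv 0$ throughout $\mathbb{R}^3\backslash D$ and $v_j=0$ for all $j$.

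Next I would show that $\bA^{\Sc}$ is also curl-free by a vector Green's identity. Starting from
\begin{equation*}
\nabla\cdot\bigl[(\nabla\times\bA^{\Sc})\times\bA^{\Sc}+(\nabla\cdot\bA^{\Sc})\bA^{\Sc}\bigr]=|\nabla\times\bA^{\Sc}|^2+|\nabla\cdot\bA^{\Sc}|^2+\bA^{\Sc}\cdot\Delta\bA^{\Sc},
\end{equation*}
integrating over $\mathbb{R}^3\backslash D$ and using $\Delta\bA^{\Sc}=0$ converts the right side into the volume integral of $|\nabla\times\bA^{\Sc}|^2+|\nabla\cdot\bA^{\Sc}|^2$. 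The boundary contributions on $\partial D$ reduce to pairings of $\nabla\times\bA^{\Sc}$ with $\bn\times\bA^{\Sc}$ and of $\nabla\cdot\bA^{\Sc}$ with $\bn\cdot\bA^{\Sc}$, both of which vanish by the homogeneous boundary data and the first step. The decay rates in (\ref{radALm}) make the flux at infinity $O(R^{-1})$, hence vanishing in the limit. We conclude $\nabla\times\bA^{\Sc}\equiv 0$ throughout the exterior.

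Finally, I would reduce $\bA^{\Sc}$ to a scalar potential and reapply Theorem \ref{uniq_SLS}. Since $\bA^{\Sc}$ is closed on the (possibly multiply connected) exterior, it is the gradient of a single-valued function precisely when its periods around a basis of $H_1(\mathbb{R}^3\backslash D)$ vanish. Such a basis can be chosen to lie entirely on $\partial D$, and on $\partial D$ we have $\bn\times\bA^{\Sc}=0$, so $\bA^{\Sc}$ is purely normal there and every such period vanishes. Writing $\bA^{\Sc}=\nabla\phi$ with $\phi$ single-valued and normalized to decay at infinity, one checks that $\phi$ is harmonic (from $\nabla\cdot\bA^{\Sc}=0$), equals a constant $C_j$ on each $\partial D_j$ (from $\bn\times\nabla\phi=0$), and satisfies $\int_{\partial D_j}\partial_n\phi\,ds=\int_{\partial D_j}\bn\cdot\bA^{\Sc}\,ds=0$. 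Thus $(\phi,\{C_j\})$ solves the homogeneous scalar modified Dirichlet Laplace problem, so Theorem \ref{uniq_SLS} yields $\phi\equiv 0$ and $\bA^{\Sc}=\nabla\phi=0$. The main obstacle is the topological step: one must argue carefully that the generators of $H_1(\mathbb{R}^3\backslash D)$ can be realized as cycles on $\partial D$, so that the vanishing tangential trace truly kills all periods of $\bA^{\Sc}$. The remaining steps are Green's-identity bookkeeping and direct reuse of Lemma \ref{th-divA} together with the scalar result already proved.
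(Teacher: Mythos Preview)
Your argument is correct and essentially self-contained, but it is not the route the paper takes. Both proofs share the first move: invoke Lemma \ref{th-divA} at $k=0$ and Theorem \ref{uniq_SLS} to obtain $\nabla\cdot\bA^{\Sc}=0$ and $v_j=0$. After that the paper simply cites Theorem 5.9 of Colton--Kress \cite{CK1}, which says that a vector field in the exterior that is harmonic, divergence-free, has vanishing tangential trace, and decays suitably must be a harmonic Dirichlet field, i.e.\ a linear combination of the $\mathbf{Y}_j$; the flux conditions $\int_{\partial D_j}\bn\cdot\bA^{\Sc}\,ds=0$ then kill the coefficients. Your proof instead reconstructs this structure theorem by hand: an energy identity forces $\nabla\times\bA^{\Sc}=0$, a topological argument (periods over cycles on $\partial D$ vanish because the tangential trace does) upgrades closedness to $\bA^{\Sc}=\nabla\phi$, and a second application of Theorem \ref{uniq_SLS} finishes. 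What you gain is independence from the external reference; what the paper gains is brevity. One small correction: your displayed vector identity has a sign error---the correct form is $\nabla\cdot\bigl[\bA^{\Sc}\times(\nabla\times\bA^{\Sc})+(\nabla\cdot\bA^{\Sc})\bA^{\Sc}\bigr]=|\nabla\times\bA^{\Sc}|^2+|\nabla\cdot\bA^{\Sc}|^2+\bA^{\Sc}\cdot\Delta\bA^{\Sc}$---but since both boundary terms vanish regardless of the sign, the conclusion is unaffected.
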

\begin{proof}
Let $(\bA^{\Sc},v_j)$ be a solution of the homogeneous vector modified 
Dirichlet problem. 
Then, applying Lemma \ref{th-divA},  $(\nabla \cdot \bA^{\Sc},\{v_j\}_{j=1}^N)$ satisfies
the homogeneous scalar modified Dirichlet problem 
By Theorem \ref{uniq_SLS},  $\psi^{\Sc}=\nabla \cdot \bA^{\Sc}=0$ and $v_j=0$. 
By theorem 5.9 in \cite{CK1}, $\bA^{\Sc}$ is a harmonic Dirichlet field, and thus 
a linear combination of the basis functions $\{\mathbf{Y}_j\}_{j=1}^N$. 
It follows from the
flux conditions $\int_{\partial D_j} \bn\cdot \bA^{\Sc}ds=0$ that $\bA^{\Sc}=0$. 
\end{proof}

\subsection{Existence and stability}

In this section, we use the Fredholm alternative to obtain
existence results for the modified Dirichlet problems, making 
use of the single and double layer potentials, $S_k$ and $D_k$, 
of classical potential theory. We also show that the solution depends continuously on the boundary data,
uniformly in $k$ in a neighborhood of $k=0$.
Next we define classical operators in potential theory:

\begin{equation}\label{operators_scal}
\begin{aligned}
S_k\sigma=& \int_{\partial D} g_k(\bx-\by)\sigma(\by)dA_{\by}, \\
D_k\sigma=& \int_{\partial D} \frac{\partial g_k}{\partial n_y}(\bx-\by)\sigma(\by)dA_{\by}, \\
S'_k\sigma=&  \int_{\partial D} \frac{\partial g_k}{\partial n_x}(\bx-\by)\sigma(\by)dA_{\by}, \\
D'_k\sigma=&  \frac{\partial}{\partial n_x} \int_{\partial D} \frac{\partial g_k}{\partial n_y}(\bx-\by)\sigma(\by)dA_{\by},
\end{aligned}
\end{equation}
where $\bx\in\partial D$ and the Green's function on the free space is:
\begin{equation}\label{greens_function}
 g_k(\bx)=\frac{e^{ik|\bx|}}{4\pi|\bx|}.
\end{equation}
For off-surface evaluations $\bx\in\mathbb{R}^3\backslash\partial D$ we have:

\begin{equation}\label{operators_scal_off}
\begin{aligned}
S_k[\sigma](\bx)=& \int_{\partial D} g_k(\bx-\by)\sigma(\by)dA_{\by}, \\ 
D_k[\sigma](\bx)=& \int_{\partial D} \frac{\partial g_k}{\partial n_y}(\bx-\by)\sigma(\by)dA_{\by}.
\end{aligned}
\end{equation}

\begin{thm}\label{HSSth}
Suppose that we represent the solution to the scalar modified Dirichlet 
problem with $k>0$
in the form
\begin{equation}\label{fieldS1}
\phi^{\Sc}(\bx)=D_k[\sigma](\bx)-i\eta S_k[\sigma](\bx) \, ,
\end{equation}
with $\eta \in \bbR\backslash \{0\}$.
Then, 
imposing the desired boundary conditions and constraints leads to a Fredholm
equation of the second kind:
\begin{equation}\label{HSS1}
\begin{aligned}
\frac{\sigma}{2}+D_k\sigma-i\eta S_k\sigma-\sum_{j=1}^NV_j\chi_j=f,\\
\int_{\partial D_j} \big( D'_k\sigma+i\eta\frac{\sigma}{2} -i\eta S'_k\sigma \big) ds=Q_j \, ,
\end{aligned}
\end{equation}
where $\chi_j$ denotes the characteristic function for boundary
$\partial D_j$.
Here, $\sigma$ and the constants $\{V_j\}_{j=1}^N$ are unknowns. Moreover,
(\ref{HSS1}) is invertible and the result holds for the modified
Dirichlet problem governed by the Laplace equation ($k=0$) as well.
\end{thm}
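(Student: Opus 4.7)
The plan is to execute the classical Brakhage--Werner combined-field argument, adapted to accommodate the extra unknowns $\{V_j\}_{j=1}^N$ and the flux constraints. First I would derive the system (\ref{HSS1}) directly from the representation. Applying the standard exterior jump relations
\[ \lim_{\bx\to\partial D^+} D_k\sigma(\bx) = \tfrac{1}{2}\sigma + D_k\sigma, \qquad \lim_{\bx\to\partial D^+}\tfrac{\partial}{\partial n}S_k\sigma(\bx) = -\tfrac{1}{2}\sigma + S'_k\sigma \]
to $\phi^{\Sc}(\bx) = D_k[\sigma](\bx) - i\eta S_k[\sigma](\bx)$ and imposing $\phi^{\Sc}|_{\partial D_j} = f + V_j$ yields the first line of (\ref{HSS1}); imposing $\int_{\partial D_j}\partial_n\phi^{\Sc}\,ds = Q_j$ yields the second, noting that $\partial_n D_k\sigma = D'_k\sigma$ is continuous across $\partial D$.

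Next, I would regard (\ref{HSS1}) as a single linear operator $T$ acting on $(\sigma, V_1,\ldots,V_N) \in L^2(\partial D) \times \mathbb{C}^N$. On a smooth surface, $D_k$ and $S_k$ are compact, the coupling $\sum_j V_j\chi_j$ is finite-rank, and each flux functional $\sigma \mapsto \int_{\partial D_j}(D'_k\sigma + i\eta\sigma/2 - i\eta S'_k\sigma)\,ds$ is bounded --- the apparent hypersingularity of $D'_k$ disappears upon pairing against the constant $1$, for instance by swapping the order of integration against $\sigma$. Hence $T = \tfrac{1}{2}I\oplus 0 + K$ with $K$ compact, so $T$ is Fredholm of index zero and invertibility reduces to injectivity by the Fredholm alternative.

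To prove injectivity, given a homogeneous solution $(\sigma, V_j)$ I would define $\phi^{\Sc} := D_k[\sigma] - i\eta S_k[\sigma]$ in $\bbR^3 \setminus \overline{D}$. By construction $(\phi^{\Sc}, \{V_j\})$ solves the homogeneous scalar modified Dirichlet problem, so Theorem \ref{uniq_S} (respectively Theorem \ref{uniq_SLS} when $k=0$) forces $\phi^{\Sc} \equiv 0$ in the exterior and $V_j = 0$. The jump relations then pin down the interior Cauchy data as $\phi^-|_{\partial D} = -\sigma$ and $\partial_n\phi^-|_{\partial D} = -i\eta\sigma$, which is the impedance condition $\partial_n\phi^- = i\eta\phi^-$. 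Green's first identity applied to $\phi^-$, which solves Helmholtz (or Laplace) in $D$, yields
\[ \int_D \big(|\nabla\phi^-|^2 - k^2|\phi^-|^2\big)\,dV = i\eta\int_{\partial D}|\phi^-|^2\,ds; \]
taking imaginary parts and using $\eta \in \bbR \setminus \{0\}$ forces $\phi^- = 0$ on $\partial D$, hence $\sigma = 0$.

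The main technical hurdle I anticipate is not the injectivity step, which is the standard Brakhage--Werner argument once the right representation is chosen, but verifying that the Fredholm structure holds \emph{uniformly} in $k$ on a neighborhood of $k=0$, so that both the Helmholtz and Laplace cases follow from a single operator-theoretic setup. This reduces to norm-continuity in $k$ of the operators $D_k$, $S_k$, $S'_k$ and of the flux functionals as $k\to 0^+$, so that $T^{-1}$ is itself continuous in $k$; this is classical but is what ensures the seamless inclusion of the Laplace case in the statement of the theorem.
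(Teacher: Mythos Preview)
Your proposal is correct and follows essentially the same line as the paper: establish the second-kind Fredholm structure via compactness of $D_k,S_k$ and finite rank of the extra terms, reduce invertibility to injectivity, and for injectivity invoke Theorem~\ref{uniq_S} (resp.\ Theorem~\ref{uniq_SLS}) to kill $\phi^{\Sc}$ and the $V_j$, then conclude $\sigma=0$. The only cosmetic differences are that the paper works in $C^{0,\alpha}(\partial D)\times\mathbb{C}^N$ rather than $L^2$, and it defers the final step ($\phi^{\Sc}=0\Rightarrow\sigma=0$) to the Brakhage--Werner argument in Colton--Kress~\cite{CK1,CK2}, which you have written out explicitly via the interior impedance identity.
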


\begin{proof}
See Appendix \ref{theoremsApp}.
\end{proof}

In order to study the vector modified Dirichlet problem, we define the
following dyadic operators:

\begin{equation}
\begin{aligned}
\overline{\overline{L}} 
\left(\begin{array}{c}\mathbf{a} \\ \rho \end{array} \right) 
=&\left( \begin{array}{c}
L_{11}\mathbf{a} + L_{12}\rho \\
L_{21}\mathbf{a} + L_{22}\rho
\end{array} \right),
\quad 
\end{aligned}
\end{equation}
where
\begin{equation}\label{operators_vect}
\begin{aligned}
L_{11}\mathbf{a}=&\ n \times S_k \mathbf{a}, \\
L_{12}\rho=&\ -\bn\times S_k(\bn\rho),\\
L_{21}\mathbf{a}=&\ 0,\\
L_{22}\rho=&\ D_k\rho,\\
\end{aligned}
\end{equation}
and
\begin{equation}
\begin{aligned}
\overline{\overline{R}} 
\left(\begin{array}{c}\mathbf{a} \\ \rho \end{array} \right) 
=&\left( \begin{array}{c}
R_{11}\mathbf{a} + R_{12}\rho \\
R_{21}\mathbf{a} + R_{22}\rho
\end{array} \right),
\end{aligned}
\end{equation}
where
\begin{equation}
\begin{aligned}
R_{11}\mathbf{a}=&\ \bn\times S_k(\bn\times\mathbf{a}),\\
R_{12}\rho=&\ \bn\times\nabla S_k(\rho),\\
R_{21}\mathbf{a}=&\ \nabla \cdot S_k(\bn\times\mathbf{a}),\\
R_{22}\rho=&\ -k^2S_k\rho.\\
\end{aligned}
\end{equation}

\begin{thm}\label{HVSth}
 Suppose that we represent the solution to the vector modified Dirichlet problem with $k>0$
in the form
\begin{equation}
\label{vecrep:DPIEv}
\bA^{\Sc}=\nabla\times S_k[\mathbf{a}](\bx)- S_k[\bn\rho](\bx)+i\eta \big( S_k[\mathbf{\bn\times a}](\bx)+ \nabla S_k[\rho](\bx) \big),
\end{equation}
with $\eta \in \bbR\backslash \{0\}$.
Then, for $|\eta|$ sufficiently small, 
imposing the desired boundary conditions and constraints leads to a Fredholm
equation of the second kind:
\begin{equation}\label{HVS1}
\begin{aligned}
\frac{1}{2}
\left(\begin{array}{c}\mathbf{a} \\ \rho \end{array} \right) 
+
\overline{\overline{L}} 
\left(\begin{array}{c}\mathbf{a} \\ \rho \end{array} \right) 
+
i\eta \overline{\overline{R}} 
\left(\begin{array}{c}\mathbf{a} \\ \rho \end{array} \right) 
+
\left(\begin{array}{c} 0 \\ \sum_{j=1}^{N} v_j \chi_j \end{array} \right) 
=
\left(\begin{array}{c} \mathbf{f} \\ h \end{array} \right), \\
\int_{\partial D_j} \Big( \bn\cdot\nabla \times S_k\mathbf{a}- 
\bn\cdot S_k(\bn\rho)+i\eta \big(\bn\cdot S_k(\mathbf{n\times a})-
\frac{\rho}{2}+S'_k \rho \big)  \Big) ds=q_j,
\end{aligned}
\end{equation}
where $\chi_j$ denotes the characteristic function for boundary $\partial D_j$.
Here, $\mathbf{a}, \rho$ and the constants $\{v_j\}_{j=1}^N$ are unknowns.
Moreover, (\ref{HVS1}) is invertible and the result holds for the vector modified Dirichlet
problem governed by the vector Laplace equation ($k=0$) as well.
\end{thm}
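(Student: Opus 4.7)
The plan is to adapt the Brakhage--Werner strategy used for Theorem~\ref{HSSth} to the vector setting. First I verify that the ansatz (\ref{vecrep:DPIEv}) defines a vector Helmholtz potential in $\bbR^3\setminus\partial D$ satisfying the radiation condition (\ref{radAH}); this is automatic because every term is built from $S_k$ applied to a surface density. Taking exterior non-tangential limits and applying the standard jump relations for the single- and double-layer potentials, I compute $\bn\times\bA^{\Sc}|_+$, $\nabla\cdot\bA^{\Sc}|_+$ and $\int_{\partial D_j}\bn\cdot\bA^{\Sc}\,ds$ and match them against the three lines of (\ref{HVS1}): the $\tfrac12\mathbf{a}$ term arises from the jump of $\bn\times(\nabla\times S_k[\mathbf{a}])$, the identity $\nabla\cdot S_k[\bn\rho]=-D_k[\rho]$ combined with the double-layer jump gives $\tfrac12\rho+D_k\rho$, and the remaining contributions become the principal-value operators forming $\overline{\overline L}$ and $i\eta\overline{\overline R}$. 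On a smooth boundary the entries of $\overline{\overline L}$ and $\overline{\overline R}$---which involve $D_k$, $\bn\times S_k$, $\nabla S_k$, $S'_k$, $-k^2S_k$, and the surface divergence of a tangential single layer---together with the flux functionals $\int_{\partial D_j}\bn\cdot(\cdot)\,ds$ are all either smoothing or finite rank on an appropriate product space. Combined with the rank-$N$ term $(\mathbf{a},\rho)\mapsto(0,\sum_j v_j\chi_j)$, the operator in (\ref{HVS1}) is a compact perturbation of $\tfrac12 I$ for every $\eta\in\bbR$, so the Fredholm alternative applies and invertibility reduces to injectivity.

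For injectivity, suppose $(\mathbf{a},\rho,\{v_j\})$ lies in the null space of (\ref{HVS1}). The corresponding $\bA^{\Sc}$ then solves the homogeneous exterior vector modified Dirichlet problem with $\mathbf{f}=\mathbf{0}$, $h=0$ and $q_j=0$, the $\{v_j\}$ playing the role of the free constants; Theorem~\ref{uniq_V} (or its Laplace analog proved just above) forces $\bA^{\Sc}\equiv\mathbf{0}$ in $\bbR^3\setminus\overline D$ and $v_j=0$. Evaluating (\ref{vecrep:DPIEv}) from inside $D$, the same jump relations yield the interior traces $\bn\times\bA^{\Sc}|_-$, $\bn\cdot\bA^{\Sc}|_-$ and $\nabla\cdot\bA^{\Sc}|_-$ as explicit linear combinations of $\mathbf{a}$, $\rho$ and the $i\eta$ contributions. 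The combined-field structure of (\ref{vecrep:DPIEv}) is designed precisely so that these interior traces satisfy an impedance-type (Robin) relation: forming the interior version of the quadratic form $I_v$ from (\ref{I_v_eq}) and applying Green's identity, its imaginary part is bounded below by a positive multiple of $\|\mathbf{a}\|^2+\|\rho\|^2$ whenever $k>0$ is real and $|\eta|$ is sufficiently small, forcing $\mathbf{a}=\mathbf{0}$ and $\rho=0$.

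The main obstacle is exactly this last step: identifying the correct impedance relation induced by the combined field and establishing a sign-definite lower bound for the associated energy functional. The smallness hypothesis on $|\eta|$ enters here, since the $i\eta\overline{\overline R}$ block couples tangential and normal densities in a way that must be dominated by the leading jump contributions in order to preserve sign-definiteness. For $k=0$ the $I_v$ identity degenerates and must be replaced by a direct argument: the interior vanishing of $\nabla\cdot\bA^{\Sc}$ and $\bn\times\bA^{\Sc}|_-$ together with $\Delta\bA^{\Sc}=\mathbf{0}$ forces $\bA^{\Sc}|_D$ into the finite-dimensional space of harmonic Dirichlet fields, and the vanishing flux conditions combined with the $i\eta$ couplings eliminate these, mirroring the Laplace argument used in the uniqueness proof for the vector modified Dirichlet problem. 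Once injectivity is established, invertibility of (\ref{HVS1}) follows from the Fredholm alternative.
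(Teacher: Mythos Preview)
Your proposal contains a genuine structural error in the Fredholm step. You assert that the entries of $\overline{\overline L}$ and $\overline{\overline R}$ are ``either smoothing or finite rank'' and hence that (\ref{HVS1}) is a compact perturbation of $\tfrac12 I$ for every $\eta\in\bbR$. This is false: $\overline{\overline R}$ is bounded on $T^{0,\alpha}(\partial D)\times C^{0,\alpha}(\partial D)$ but \emph{not} compact. The offending entries are $R_{12}\rho=\bn\times\nabla S_k[\rho]$ and $R_{21}\mathbf a=\nabla\cdot S_k[\bn\times\mathbf a]$, each of which involves a full first-order derivative composed with the order-$(-1)$ smoothing of $S_k$ and is therefore merely bounded of order zero; indeed the paper records (Section~7) that the spectrum of $\tfrac12 I+i\eta\overline{\overline R}$ has three cluster points. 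The correct route to a second-kind equation is the one taken in the paper's Appendix: for $|\eta|<\|\overline{\overline R}\|^{-1}$ the operator $\tfrac12 I+i\eta\overline{\overline R}$ has a bounded inverse by the Neumann series, and after left-multiplying by that inverse the remaining perturbation $\overline{\overline L}$ (plus the finite-rank pieces) is compact, so Fredholm theory applies. This is exactly where the smallness hypothesis on $|\eta|$ enters, and it is the reason the theorem is stated only for $|\eta|$ sufficiently small.

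Correspondingly, you have placed the smallness hypothesis in the wrong half of the argument. Your injectivity step tries to manufacture an interior impedance identity whose sign-definiteness requires $|\eta|$ small. In fact, injectivity holds for every real $\eta\neq 0$: once the exterior field vanishes by Theorem~\ref{uniq_V}, the combined-field representation (\ref{vecrep:DPIEv}) forces the sources to vanish by the standard Colton--Kress argument cited in the paper (the interior traces satisfy a homogeneous problem whose only solution is zero, and then the jump relations give $\mathbf a=\mathbf 0$, $\rho=0$). The paper states this explicitly in Section~7: ``the uniqueness proof holds for arbitrary values of $\eta$, existence requires further analysis.'' So your energy-functional machinery for injectivity is unnecessary, and the obstacle you flag as ``main'' is not the actual one; the actual obstacle is obtaining the second-kind structure in the presence of the non-compact block $\overline{\overline R}$.
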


\begin{proof}
See Appendix \ref{theoremsApp}.
\end{proof}

\begin{definition} \label{dpiedef}
We will refer to (\ref{HSS1}) and (\ref{HVS1}) as the scalar and vector
decoupled potential integral equations. 
The former will be abbreviated by {\em DPIEs} and the latter by {\em DPIEv}.
Together, they form the {\em DPIE}.
\end{definition}

The following two theorems show that the solutions to the modified
Dirichlet problems are continuous functions of the boundary data all
the way to $k=0$.  In particular, they are independent of the genus of
$\partial D$.

\begin{thm}\label{unif_s}
The scalar modified Dirichlet problem has a unique solution for $k\ge0$. 
Moreover, the solution depends continuously on the boundary data  
$f,\{Q_j\}_{j=1}^N$ in the sense that the operator mapping the given boundary data 
onto the solution is uniformly continuous from 
\[ f,\{Q_j\}_{j=1}^N\in C^{0,\alpha}(\partial D)\times \mathbb{C}^N 
\quad \rightarrow \quad \phi^{\Sc},\{V_j\}_{j=1}^N\in C^{0,\alpha}(\mathbb{R}^3/D)\times \mathbb{C}^N 
\]
for any $k \in [0,k_{max}]$, with fixed $k_{max}$.
$C^{0,\alpha}(X)$ here is equipped with the usual H\"{o}lder norm \cite{CK1}.
\end{thm}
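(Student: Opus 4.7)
The plan is to decompose the theorem into three separate tasks: uniqueness, existence, and the uniform continuity estimate, and to handle each using the preceding results in the paper together with standard Fredholm theory.

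First, uniqueness. For $k>0$ this is already Theorem \ref{uniq_S}, and for $k=0$ it is Theorem \ref{uniq_SLS}. So both the Helmholtz and Laplace cases of the scalar modified Dirichlet problem have at most one solution. Existence follows from Theorem \ref{HSSth}: I would write the solution in the combined-layer form $\phi^{\Sc}=D_k[\sigma]-i\eta S_k[\sigma]$, reduce the boundary conditions and the flux constraints to the coupled system (\ref{HSS1}), and invoke the invertibility statement in Theorem \ref{HSSth} to solve for $(\sigma,\{V_j\}_{j=1}^N)$. Evaluating the representation in $\mathbb{R}^3\backslash D$ yields the desired potential.

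Second, the continuity estimate. I would package (\ref{HSS1}) as a single operator equation
\begin{equation*}
\mathcal{A}_k \, \mathbf{u} = \mathbf{g}, \qquad
\mathbf{u}=(\sigma,V_1,\dots,V_N), \quad \mathbf{g}=(f,Q_1,\dots,Q_N),
\end{equation*}
acting on $X := C^{0,\alpha}(\partial D)\times \mathbb{C}^N$. The operator splits as $\mathcal{A}_k = \tfrac{1}{2}I + \mathcal{K}_k$, where $\mathcal{K}_k$ collects the $D_k$, $S_k$, $D'_k$, $S'_k$ contributions together with the $V_j\chi_j$ and $\{Q_j\}$ couplings. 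The mapping properties of $S_k,D_k,S'_k,D'_k$ on H\"older spaces (Colton--Kress) give that $\mathcal{K}_k$ is compact on $X$ for every $k \ge 0$, and that $k\mapsto \mathcal{K}_k$ is continuous in operator norm: indeed, $g_k-g_0=(e^{ik|\bx|}-1)/(4\pi|\bx|)$ and the analogous expressions for the normal-derivative kernels are continuous in $k$ in the relevant Hölder operator norms on the compact interval $[0,k_{\max}]$.

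Third, I combine the pieces. By Theorem \ref{HSSth}, $\mathcal{A}_k$ is boundedly invertible for every $k\in[0,k_{\max}]$. Since $k\mapsto\mathcal{A}_k$ is norm continuous and invertibility is an open condition with $\mathcal{A}_k^{-1}$ depending continuously on $\mathcal{A}_k$, the family $\{\mathcal{A}_k^{-1}\}_{k\in[0,k_{\max}]}$ is norm continuous, and by compactness of $[0,k_{\max}]$ the bound
\begin{equation*}
C := \sup_{k\in[0,k_{\max}]} \|\mathcal{A}_k^{-1}\|_{X\to X} < \infty
\end{equation*}
holds. Hence $\|\mathbf{u}\|_X \le C\,\|\mathbf{g}\|_X$, which already delivers the claimed uniform continuity of $\{V_j\}$ and the density $\sigma$ in $f,\{Q_j\}$. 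Applying the representation $\phi^{\Sc}=D_k[\sigma]-i\eta S_k[\sigma]$ and the standard uniform (in $k$) H\"older mapping bounds for $S_k,D_k$ from $C^{0,\alpha}(\partial D)$ into $C^{0,\alpha}(\mathbb{R}^3\backslash D)$ propagates the estimate to $\phi^{\Sc}$ and completes the proof.

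The main obstacle I anticipate is the uniform-in-$k$ norm continuity of $\mathcal{K}_k$, and in particular a clean treatment at $k=0$ where $g_k$ transitions from the oscillatory Helmholtz kernel to the Laplace kernel. This is a matter of expanding $g_k-g_0$ in $k$ and using dominated convergence in the H\"older seminorms; it is standard but is the one step I would write out carefully. Everything else reduces to Fredholm alternative plus the already-established invertibility in Theorem \ref{HSSth}.
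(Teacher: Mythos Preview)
Your proposal is correct and follows essentially the same route as the paper: reduce to the integral equation (\ref{HSS1}), invoke the invertibility established in Theorem \ref{HSSth} for each $k$, obtain a uniform-in-$k$ bound on the inverse over $[0,k_{\max}]$, and then compose with the H\"older mapping bounds for the layer potentials to pass from $\sigma$ to $\phi^{\Sc}$. The only cosmetic difference is that the paper phrases the uniformity step via collective compactness in the sense of Anselone, whereas you argue directly through norm continuity of $k\mapsto\mathcal{K}_k$ on a compact parameter interval; both devices deliver the same uniform bound on $\mathcal{A}_k^{-1}$.
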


\begin{proof}
See Appendix \ref{theoremsApp}.
\end{proof}

\begin{thm}\label{unif_v}
The vector modified Dirichlet problem has a unique solution for $k\ge0$. 
Moreover, the solution depends continuously on the boundary data  
$\mathbf{f},h,\{q_j\}_{j=1}^N$ in the sense that the operator mapping the given 
boundary data onto the solution is uniformly continuous from 
\[ \mathbf{f},h,\{Q_j\}_{j=1}^N\in T^{0,\alpha}(\partial D)\times 
C^{0,\alpha}(\partial D)\times \mathbb{C}^N \quad\rightarrow\quad 
\bA^{\Sc},\{v_j\}_{j=1}^N\in C^{0,\alpha}(\mathbb{R}^3/D)\times \mathbb{C}^N
\]
for any $k \in [0,k_{max}]$, with fixed $k_{max}$.
Here, $T^{0,\alpha}(\partial D)$ is equipped with the usual Holder norm \cite{CK1}.
\end{thm}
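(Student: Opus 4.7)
The plan is to follow the strategy used for Theorem~\ref{unif_s}, upgrading the fixed-$k$ invertibility statement of Theorem~\ref{HVSth} into a statement uniform in $k \in [0,k_{max}]$. I would first recast the vector modified Dirichlet problem as the augmented integral equation (\ref{HVS1}), viewing it as an operator equation $(\tfrac{1}{2}\mathcal{I} + M_k)(\mathbf{a},\rho,\{v_j\}) = (\mathbf{f},h,\{q_j\})$ on the product space $X := T^{0,\alpha}(\partial D) \times C^{0,\alpha}(\partial D) \times \mathbb{C}^N$, where $M_k$ collects $\overline{\overline{L}} + i\eta\,\overline{\overline{R}}$, the rank-$N$ term $\sum_j v_j \chi_j$, and the $N$ flux-constraint rows $\int_{\partial D_j}(\cdots)\,ds$.

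Second, I would establish that $M_k : X \to X$ is compact for every $k \in [0,k_{max}]$. The building blocks $S_k,D_k,S'_k$ and their tangential/normal combinations are compact on the appropriate H\"older spaces by the classical smoothing estimates on smooth surfaces; the correction $\sum_j v_j \chi_j$ is finite rank; and the constraint functionals are bounded linear maps into $\mathbb{C}^N$, hence compact. Combined with the injectivity of $\tfrac{1}{2}\mathcal{I} + M_k$ coming from Theorem~\ref{uniq_V} for $k>0$ and from the immediately preceding theorem for $k=0$, the Fredholm alternative shows that $\tfrac{1}{2}\mathcal{I} + M_k$ is boundedly invertible for every $k \in [0,k_{max}]$, with Theorem~\ref{HVSth} furnishing the required smallness of $\eta$.

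Third, for the uniform continuous dependence in $k$, I would show that $k \mapsto M_k$ is continuous in the operator norm on the compact interval $[0,k_{max}]$, including at $k=0$. This reduces to norm continuity of $g_k$ and its surface and normal derivatives as kernels in the classes defining $S_k,D_k,S'_k$, which follows from the Taylor expansion $g_k(\bx) = \tfrac{1}{4\pi|\bx|} + \tfrac{ik}{4\pi} - \tfrac{k^2|\bx|}{8\pi} + O(k^3)$ (and analogous expansions for derivatives), so each coefficient operator is $k$-holomorphic into the corresponding kernel class. Since inversion is continuous at every bounded invertible operator, the family $(\tfrac{1}{2}\mathcal{I} + M_k)^{-1}$ is norm-continuous on $[0,k_{max}]$ and therefore uniformly bounded there. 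Composing this uniform estimate on $(\mathbf{a},\rho,\{v_j\})$ with the exterior H\"older mapping properties of the representation (\ref{vecrep:DPIEv}) yields the desired uniform H\"older estimate on $\bA^{\Sc}$ in $\mathbb{R}^3\setminus D$.

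The principal obstacle I anticipate is the low-frequency endpoint $k=0$: one must verify that $\tfrac{1}{2}\mathcal{I} + M_0$ is genuinely invertible on $X$, which rests on the Laplace version of Theorem~\ref{HVSth}, and this in turn relies on the Laplace uniqueness theorem whose proof routes through Lemma~\ref{th-divA} and Theorem~\ref{uniq_SLS}. One must also check that norm continuity of $M_k$ extends all the way to $k=0$ despite the singular kernels --- the subtlety being the regularity of the operator family in the parameter $k$, not merely in the spatial variables. A secondary technical point is that the constant $\eta$ must be fixed once and for all, small enough that Theorem~\ref{HVSth} applies uniformly on $[0,k_{max}]$; monitoring the smallness threshold through the norm continuity of $\overline{\overline{R}}$ in $k$ is what makes this legitimate.
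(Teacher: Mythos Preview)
Your overall strategy mirrors the paper's: reduce to the augmented integral equation (\ref{HVS1}), invoke Theorem~\ref{HVSth} for invertibility at each fixed $k$, and then upgrade to a uniform bound on $[0,k_{max}]$ by exploiting the $k$-regularity of the kernel family. The paper phrases the last step through \emph{collective compactness} (in the sense of Anselone), whereas you argue via norm-continuity of $k\mapsto M_k$ together with continuity of operator inversion; either mechanism yields the uniform bound, so on that point the approaches are equivalent.

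There is, however, a genuine gap in your second step. You assert that $M_k$ is compact because ``$S_k,D_k,S'_k$ and their tangential/normal combinations are compact,'' but this is false for the part coming from $\overline{\overline{R}}$. The off-diagonal entries $R_{12}\rho=\bn\times\nabla S_k\rho$ and $R_{21}\mathbf{a}=\nabla\cdot S_k(\bn\times\mathbf{a})$ are singular-integral operators of order zero; they are bounded on $T^{0,\alpha}\times C^{0,\alpha}$ but \emph{not} compact (their spectra cluster away from the origin, as the paper itself notes in the high-frequency section). Consequently $\tfrac12\mathcal I+M_k$ is not of the form ``identity plus compact,'' and you cannot invoke the Fredholm alternative directly on $M_k$. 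The remedy, which is exactly what the proof of Theorem~\ref{HVSth} does, is to split: for $|\eta|$ small enough (uniformly in $k$, by the norm-continuity of $\overline{\overline{R}}$ that you already noted), $\tfrac12\mathcal I+i\eta\,\overline{\overline{R}}$ is invertible by a Neumann series, and the remaining pieces $\overline{\overline{L}}$ plus the finite-rank augmentation are a compact perturbation of that invertible operator. Once you reorganize the argument this way, your uniformity step via norm-continuity of the family and continuity of inversion goes through as written.
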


\begin{proof}
See Appendix \ref{theoremsApp}.
\end{proof}


\section{Electromagnetic scattering and 
modified Dirichlet problems}

In this section, we explain the connection between the scalar and
vector modified Dirichlet problems and the Maxwell equations.  It is
evident from Theorems \ref{unif_s} and \ref{unif_v} that, if such a
reformulation exists, then we have overcome the topological
low-frequency breakdown that makes electromagnetic scattering from
surfaces with nontrivial genus so difficult at low
frequency. 

We will first show that the vector
and scalar modified Dirichlet problems preserve the Lorenz gauge, so 
that the induced $\bE$ and $\bH$ fields are Maxwellian. We will also show
that the calculation is stable, in the sense that bounded ``incoming" data leads to 
bounded ``outgoing" data, independent of the frequency.
We will then show, in Theorem \ref{mainthm}, that the modified Dirichlet
problems lead directly to the solution of the desired scattering problem.

\begin{thm}\label{conserv} 
Let $\bA^{\In},\phi^{\In}$ be bounded (for $\omega\rightarrow 0$) incoming vector and scalar Helmholtz
potentials in the Lorenz gauge:
\begin{equation}
\mathbf{\nabla\cdot A^{\In}}=i\omega\mu\epsilon\phi^{\In} \, .
\end{equation}
Then, the associated vector and scalar scattered Helmholtz potentials
$\bA^{\Sc},\phi^{\Sc}$ (see Definition \ref{defscatpot}) are also bounded and satisfy the Lorenz
gauge condition.
\end{thm}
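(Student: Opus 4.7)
The plan is to treat the two assertions separately. Uniform boundedness of $\bA^{\Sc}$ and $\phi^{\Sc}$ as $\omega\to 0$ follows directly from Theorems \ref{unif_s} and \ref{unif_v}, once the boundary data prescribed in Definition \ref{defscatpot} are seen to be bounded uniformly in $\omega$. For the scalar problem, the data $f=-\phi^{\In}|_{\partial D}$ and $Q_j=-\int_{\partial D_j}\partial\phi^{\In}/\partial n\,ds$ inherit boundedness directly from $\phi^{\In}$. For the vector problem, $\mathbf{f}=-\bn\times\bA^{\In}$ and $q_j=-\int_{\partial D_j}\bn\cdot\bA^{\In}\,ds$ are bounded, while $h=-\nabla\cdot\bA^{\In}=-i\omega\mu\epsilon\phi^{\In}$ by the Lorenz gauge on the incoming field, which in fact tends to $0$ with $\omega$. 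The uniform continuity statements in Theorems \ref{unif_s} and \ref{unif_v} then yield bounds on $\phi^{\Sc}$ and $\bA^{\Sc}$ that are uniform for $k\in[0,k_{max}]$.

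To prove the gauge identity, I would apply Lemma \ref{th-divA} to the solution $(\bA^{\Sc},\{v_j\})$ of the vector modified Dirichlet problem of Definition \ref{defscatpot}. The lemma says that $\psi^{\Sc}:=\nabla\cdot\bA^{\Sc}$, together with the same $\{v_j\}$ playing the role of the scalar voltages, solves a scalar modified Dirichlet problem with Dirichlet data $h=-i\omega\mu\epsilon\phi^{\In}|_{\partial D}$ and flux data $-k^2 q_j=k^2\int_{\partial D_j}\bn\cdot\bA^{\In}\,ds$. In parallel, set $\tilde\phi:=i\omega\mu\epsilon\phi^{\Sc}$ and $\tilde V_j:=i\omega\mu\epsilon V_j$, where $V_j$ are the constants attached to $\phi^{\Sc}$ in Definition \ref{defscatpot}. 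Then $\tilde\phi$ is a scalar Helmholtz potential satisfying the Sommerfeld radiation condition, and $\tilde\phi|_{\partial D_j}=-i\omega\mu\epsilon\phi^{\In}|_{\partial D_j}+\tilde V_j$, so the Dirichlet portion matches the data of $\psi^{\Sc}$ by construction.

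The nontrivial step is to show that $\tilde\phi$ realizes the same flux data as $\psi^{\Sc}$, i.e.\ that
\begin{equation*}
-i\omega\mu\epsilon\int_{\partial D_j}\frac{\partial\phi^{\In}}{\partial n}\,ds=k^2\int_{\partial D_j}\bn\cdot\bA^{\In}\,ds.
\end{equation*}
Since the incoming potentials extend smoothly into $D$ and satisfy both $\Delta\phi^{\In}+k^2\phi^{\In}=0$ and $\nabla\cdot\bA^{\In}=i\omega\mu\epsilon\phi^{\In}$ throughout $\mathbb{R}^3$, the divergence theorem applied to the interior domain $D_j$ yields
\begin{equation*}
\int_{\partial D_j}\frac{\partial\phi^{\In}}{\partial n}\,ds=-k^2\int_{D_j}\phi^{\In}\,dV,\qquad \int_{\partial D_j}\bn\cdot\bA^{\In}\,ds=i\omega\mu\epsilon\int_{D_j}\phi^{\In}\,dV,
\end{equation*}
and both sides of the desired identity collapse to $i\omega^3\mu^2\epsilon^2\int_{D_j}\phi^{\In}\,dV$, using $k^2=\omega^2\mu\epsilon$.

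With both Dirichlet and flux data matched, the uniqueness Theorem \ref{uniq_S} (or Theorem \ref{uniq_SLS} at $\omega=0$) forces $\psi^{\Sc}=\tilde\phi$ throughout $\mathbb{R}^3\backslash D$, which is precisely $\nabla\cdot\bA^{\Sc}=i\omega\mu\epsilon\phi^{\Sc}$. The main technical obstacle is the flux matching above, where the Lorenz gauge of the incoming field and the smoothness of $\phi^{\In},\bA^{\In}$ inside $D$ must be combined, via the divergence theorem, to reconcile the $-k^2 q_j$ produced by Lemma \ref{th-divA} with the $i\omega\mu\epsilon Q_j$ produced by $\tilde\phi$; every other ingredient is either a direct verification or a direct appeal to the uniqueness and stability theorems already established.
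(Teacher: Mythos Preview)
Your proposal is correct and follows essentially the same strategy as the paper: apply Lemma~\ref{th-divA} to identify the scalar modified Dirichlet problem solved by $\nabla\cdot\bA^{\Sc}$, show that $i\omega\mu\epsilon\,\phi^{\Sc}$ solves the \emph{same} problem, and invoke uniqueness (Theorem~\ref{uniq_S}); boundedness is then read off from Theorems~\ref{unif_s} and~\ref{unif_v}. The only cosmetic difference is in the flux-matching step: the paper packages the computation as a single application of the divergence theorem to $\bE^{\In}=i\omega\bA^{\In}-\nabla\phi^{\In}$ (which is divergence-free in $D_j$), obtaining $\int_{\partial D_j}\partial\phi^{\In}/\partial n\,ds=i\omega\int_{\partial D_j}\bn\cdot\bA^{\In}\,ds$ directly, whereas you compute the two surface integrals separately via $\Delta\phi^{\In}=-k^2\phi^{\In}$ and $\nabla\cdot\bA^{\In}=i\omega\mu\epsilon\phi^{\In}$ and then compare; the two computations are equivalent.
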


\begin{proof}
By Lemma \ref{th-divA}, the scalar Helmholtz potential
$\psi^{\Sc}=\nabla\cdot \bA^{\Sc}$ satisfies
\begin{equation}
\begin{aligned}
\psi^{\Sc}&=h+v_j=-\nabla\cdot \bA^{\In}+v_j\big|_{\partial D}\\
\int_{\partial D_j} \frac{\partial \psi^{\Sc}}{\partial n}ds &=-k^2q_j=k^2\int_{\partial D_j} \mathbf{n\cdot A^{\In}}ds \, .
\end{aligned}
\end{equation}
Using the Lorenz gauge condition on the boundary itself, we may write
\begin{equation}\label{firstcond}
\psi^{\Sc}=h+v_j=-i\omega\mu\epsilon\phi^{\In}+v_j\big|_{\partial D} \, .
\end{equation}
Since
\begin{equation}
\begin{aligned}
\nabla \cdot \big( i\omega\bA^{\In}-\nabla\phi^{\In}\big)=0 \, ,
\end{aligned}
\end{equation}
we have
\begin{equation}
\int_{\partial D_j} \frac{\partial \phi^{\In}}{\partial n}ds =i\omega\int_{\partial D_j} \mathbf{n\cdot A^{\In}}ds \, .
\end{equation}
Thus,
\begin{equation}\label{secondcond}
\int_{\partial D_j} \frac{\partial \psi^{\Sc}}{\partial n}ds =k^2\int_{\partial D_j} \mathbf{n\cdot A^{\In}}ds=\frac{k^2}{i\omega}\int_{\partial D_j} \frac{\partial \phi^{\In}}{\partial n}ds=-i\omega\mu\epsilon\int_{\partial D_j} \frac{\partial \phi^{\In}}{\partial n}ds \, .
\end{equation}
From (\ref{firstcond}) and (\ref{secondcond}),
we see that $\psi^{\Sc}$ and $i\omega\mu\epsilon\phi^{\Sc}$ satisfy
the same scalar modified Dirichlet problem.
By uniqueness (Theorem \ref{uniq_S}), we find that 
\[ i\omega\mu\epsilon\phi^{\Sc}=\psi^{\Sc}=\nabla\cdot \bA^{\Sc} \, , \]
so that $\bA^{\Sc}$ and $\phi^{\Sc}$ are in the Lorenz gauge.
By Theorems \ref{unif_s} and \ref{unif_v}, $\bA^{\Sc},\phi^{\Sc}$ are uniformly 
continuous functions 
of $\bA^{\In},\phi^{\In}$ for $k\in[0,k_{max}]$. Since
$\bA^{\In},\phi^{\In}$ are bounded, $\bA^{\Sc},\phi^{\Sc}$ are also bounded.
\end{proof}

The next theorem is the main result of the present paper. 

\begin{thm} \label{mainthm}
For any $k\ge0$, let $\mathbf{E^{\In},H^{\In}}$ be an incoming electromagnetic field 
described by the potentials $\bA^{\In},\phi^{\In}$ in the Lorenz gauge:
\[
\begin{aligned}
\bH^{\In}&=\frac{1}{\mu}\nabla\times \bA^{\In},\\
\bE^{\In}&=i\omega\bA^{\In}-\nabla \phi^{\In}, \\
\nabla\cdot\bA^{\In} &=i\omega\mu\epsilon \phi^{\In} \, ,
\end{aligned}
\]
and let 
$\bA^{\Sc},\phi^{\Sc}$ denote the corresponding scattered vector and scalar
potentials (Definition  \ref{defscatpot}).
Then the electromagnetic fields 
$\mathbf{E^{\Sc},H^{\Sc}}$ scattered from a perfect conductor are given by
\[
\begin{aligned}
\bH^{\Sc}&=\frac{1}{\mu}\nabla\times \bA^{\Sc},\\
\bE^{\Sc}&=i\omega\bA^{\Sc}-\nabla \phi^{\Sc}.
\end{aligned}
\]
with
\[ \nabla\cdot\bA^{\Sc} =i\omega\mu\epsilon \phi^{\Sc}, \]
\[
\begin{aligned}
\bn\times \bE^{\Sc}&=-\bn\times \bE^{\In}|_{\partial D}, \quad
\bn\cdot \bH^{\Sc}&=-\bn\cdot \bH^{\In}|_{\partial D}.
\end{aligned}
\]
\end{thm}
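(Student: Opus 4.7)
My plan is to verify, in order, (i) the Lorenz gauge relation, (ii) the tangential electric boundary condition, and (iii) the normal magnetic boundary condition, and then to note that $\bE^{\Sc},\bH^{\Sc}$ so defined also satisfy Maxwell's equations and the Silver--M\"uller radiation condition in $\mathbb{R}^3\setminus\overline{D}$. Step (i) is immediate: the hypotheses on $\bA^{\In},\phi^{\In}$ match those of Theorem \ref{conserv}, whose conclusion is exactly $\nabla\cdot\bA^{\Sc}=i\omega\mu\epsilon\phi^{\Sc}$.

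For step (ii) I would unpack the boundary data from Definition \ref{defscatpot}. On $\partial D_j$, the vector modified Dirichlet problem yields $\bn\times\bA^{\Sc}=-\bn\times\bA^{\In}$, while the scalar modified Dirichlet problem yields $\phi^{\Sc}=-\phi^{\In}+V_j$ for some constant $V_j$. Because $\bn\times\nabla u$ on $\partial D$ depends only on the tangential surface gradient of $u$, and the tangential gradient of a constant vanishes, the scalar condition gives $\bn\times\nabla\phi^{\Sc}=-\bn\times\nabla\phi^{\In}$ on $\partial D_j$. Combining,
\begin{equation*}
\bn\times\bE^{\Sc}=i\omega(\bn\times\bA^{\Sc})-\bn\times\nabla\phi^{\Sc}=-i\omega(\bn\times\bA^{\In})+\bn\times\nabla\phi^{\In}=-\bn\times\bE^{\In}.
\end{equation*}

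For step (iii) I would use the surface identity $\bn\cdot(\nabla\times\bA)=-\nabla_s\cdot(\bn\times\bA)$, which shows that the normal component of the curl on $\partial D$ is determined by the tangential trace of $\bA$ alone. Applied to $\bA^{\Sc}$ and $\bA^{\In}$ together with $\bn\times\bA^{\Sc}=-\bn\times\bA^{\In}$, this gives $\bn\cdot(\nabla\times\bA^{\Sc})=-\bn\cdot(\nabla\times\bA^{\In})$, hence $\bn\cdot\bH^{\Sc}=-\bn\cdot\bH^{\In}$. That $\bE^{\Sc},\bH^{\Sc}$ solve the Maxwell system in $\mathbb{R}^3\setminus\overline{D}$ is automatic from the Helmholtz equations for the potentials together with the Lorenz gauge; the Silver--M\"uller radiation condition is obtained by combining the potential radiation conditions (\ref{radAH}) and (\ref{radfiH}) with the Lorenz gauge and the standard asymptotic $\nabla\phi^{\Sc}(\bx)=ik\phi^{\Sc}(\bx)\frac{\bx}{|\bx|}+o(1/|\bx|)$.

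The one delicate point, which I expect to be the only nontrivial part of the argument, is confirming that the auxiliary constants $V_j$ and $v_j$ produced by the two modified Dirichlet problems do not pollute the electromagnetic boundary conditions. The $V_j$'s disappear in (ii) because only tangential derivatives of $\phi^{\Sc}$ enter, while the $v_j$'s are hidden inside step (i): the choice $q_j=-\int_{\partial D_j}\bn\cdot\bA^{\In}\,ds$ in Definition \ref{defscatpot} is made precisely so that, via Lemma \ref{th-divA}, $\nabla\cdot\bA^{\Sc}$ satisfies the same scalar modified Dirichlet problem as $i\omega\mu\epsilon\phi^{\Sc}$, whereupon the uniqueness of Theorem \ref{uniq_S} forces the gauge. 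Verifying this compatibility between the prescribed fluxes $q_j$ and the scalar flux data induced by the incoming Lorenz gauge is the single consistency check that is not purely formal.
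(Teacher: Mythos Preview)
Your proposal is correct and follows essentially the same route as the paper: invoke Theorem~\ref{conserv} for the Lorenz gauge, combine the two Dirichlet boundary conditions on the potentials to get $\bn\times\bE^{\Sc}=-\bn\times\bE^{\In}$ (with the $V_j$'s killed by the tangential gradient), and use the surface identity $\bn\cdot(\nabla\times\bA)=-\nabla_s\cdot(\bn\times\bA)$ on $\bn\times\bA^{\Sc}=-\bn\times\bA^{\In}$ for the normal magnetic condition. The paper additionally verifies, as a sanity check rather than a logical necessity, that the net induced charge vanishes and that the $B$-cycle consistency conditions~(\ref{Acond2}) hold in the static limit; you may wish to mention these, but they are not required for the theorem as stated.
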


\begin{proof}
Since $\bA^{\Sc},\phi^{\Sc}$ are Helmholtz potentials in the Lorenz gauge, 
the associated $\bE^{\Sc},\bH^{\Sc}$ are valid Maxwell fields that satisfy the 
necessary radiation condition. We need only check that the desired boundary conditions
are satisfied.
From the boundary conditions on $\bA^{\Sc},\phi^{\Sc}$ we have
\begin{equation}\label{potA1}
\begin{aligned}
\bn\times \bA^{\Sc}&=-\bn\times \bA^{\In}|_{\partial D}\\
\Rightarrow i\omega\bn\times \bA^{\Sc}&=-i\omega\bn\times \bA^{\In}|_{\partial D} \, ,
\end{aligned}
\end{equation}

\begin{equation}\label{potphi1}
\begin{aligned}
\phi^{\Sc}&=-\phi^{\In}+V_j|_{\partial D_j}\\
\Rightarrow \bn\times\nabla \phi^{\Sc}&=-\bn\times\nabla \phi^{\In}|_{\partial D} \, .
\end{aligned}
\end{equation}
Adding (\ref{potA1}) and (\ref{potphi1}), we have 
\begin{equation}\label{potA}
\begin{aligned}
i\omega\bn\times \bA^{\Sc}-\bn\times\nabla \phi^{\Sc}&=-i\omega\bn\times \bA^{\In}+\bn\times\nabla \phi^{\In}|_{\partial D}\\
\Rightarrow \bn\times \bE^{\Sc}&=-\bn\times \bE^{\In}|_{\partial D} \, .
\end{aligned}
\end{equation}
Taking the surface divergence of (\ref{potA1}), we also have that
\begin{equation}
\begin{aligned}
\bn\times \bA^{\Sc}&=-\bn\times \bA^{\In}|_{\partial D}\\
\Rightarrow \nabla_s \cdot\bn\times \bA^{\Sc}&=-\nabla_s \cdot\bn\times \bA^{\In}|_{\partial D}\\
\Rightarrow \bn\cdot \bH^{\Sc}&=-\bn\cdot \bH^{\In}|_{\partial D} \, .
\end{aligned}
\end{equation}

Thus, for $k>0$ we have the correct solution. While continuity arguments are sufficient
to verify that the zero frequency solution is the desired one, it is worth checking that
the net charge still vanishes at $k=0$ and that the consistency conditions
(\ref{stab_MFIE}) are satisfied. For this, note first that
$\mathbf{A},\phi$ are bounded, so that at $k=0$,
\begin{equation}
\begin{aligned}
\bH_0^{\In}&=\lim_{k\rightarrow 0}\bH^{\In}= \nabla\times \bA^{\In}_0,\\
\bE_0^{\In}&=\lim_{k\rightarrow 0}\bE^{\In}=-\nabla \phi^{\In}_0,\\
\bH_0^{\Sc}&=\lim_{k\rightarrow 0}\bH^{\Sc}=\nabla\times \bA^{\Sc}_0,\\
\bE_0^{\Sc}&=\lim_{k\rightarrow 0}\bE^{\Sc}=-\nabla \phi^{\Sc}_0 \, .
\end{aligned}
\end{equation}
The net charge is computed as the surface integral of $\bn \cdot \bE$, and we have
\begin{equation}
\begin{aligned}
\int_{\partial D_j}\bn\cdot \bE_0^{\Sc}ds &=-\int_{\partial D_j}\frac{\partial \phi^{\Sc}_0}{\partial n}ds=\int_{\partial D_j}\frac{\partial \phi^{\In}_0}{\partial n}ds=\int_{D_j}\Delta \phi^{\In}_0dv=0,\\
\int_{\partial D_j}\bn\cdot \bE^{\Sc}ds &=\int_{\partial D_j}i\omega\bn\cdot \bA^{\Sc}-\frac{\partial \phi^{\Sc}}{\partial n}ds=\\
&=-\int_{\partial D_j}i\omega\bn\cdot \bA^{\In}-\frac{\partial \phi^{\In}}{\partial n}ds=\int_{D_j}\nabla \cdot \big( i\omega\bA^{\In}-\nabla\phi^{\In}\big)dv=0 \, .
\end{aligned}
\end{equation}
The last equality follows from the fact that the
incoming potentials are assumed to be specified in the Lorenz
gauge. In short,
\begin{equation}
\begin{aligned}
&\int_{\partial D_j}\bn\cdot \bE_0^{\Sc}ds=\lim_{k\rightarrow 0}\int_{\partial D_j}\bn\cdot \bE^{\Sc}ds=0 \, ,
\end{aligned}
\end{equation}
as expected.
The consistency conditions (\ref{stab_MFIE}) on the flux of 
the magnetic field through each hole 
$S_j$ \cite{Mfietorus} are also easily verified for all $\omega\ge0$:
\begin{equation}
\begin{aligned}
\bn\times \bA^{\Sc}&=-\bn\times \bA^{\In}|_{\partial D} \\ 
\Rightarrow \oint_{B_j}\bA^{\Sc} \cdot d\mathbf{l}
&=-\oint_{B_j}\bA^{\In} \cdot d\mathbf{l}\, .
\end{aligned}
\end{equation}
Here, $B_j$ is a  $B$-cycle, namely a 
loop on $\partial D$ which goes around some ``hole" and whose
spanning surface $S_j$ 
lies in the {\em exterior} of the domain (see Figure \ref{Bcycles} and related discussion).
\end{proof}

\section{Incoming Potentials} \label{incoming:sec}

In a stable DPIE approach, the vector and scalar potentials must be
defined in the Lorenz gauge and be bounded as $\omega \rightarrow 0$.
We will need to find a representation for the incoming fields that
will permit the stable uncoupling of the vector and scalar potentials.
Assuming we are given the ``impressed" free current and charge
$\bJ^{imp},\rho^{imp}$, the incoming potentials
\begin{equation} \label{impressed}
\begin{aligned}
\bA^{\In}(\bx)&=\mu S_k[\bJ^{imp}](\bx),\\
\phi^{\In}(\bx)&=\frac{1}{\epsilon}S_k[\rho^{imp}](\bx),
\end{aligned}
\end{equation}
satisfy these requirements. For an incoming plane wave with a polarization vector $\bE_p$ and a direction of
propagation $\mathbf{u}$, given by
\begin{equation} \label{emplanew}
\begin{aligned}
\bE^{\In}=\bE_pe^{ik\mathbf{u\cdot x}},\quad \bH^{\In}=\bH_pe^{ik\mathbf{u\cdot x}}=\frac{\mathbf{u}\times \bE_p}{Z} e^{ik\mathbf{u\cdot x}} ,
\end{aligned}
\end{equation}
where $Z=\sqrt{\frac{\mu}{\epsilon}}$, the standard representation of incoming vector and scalar potentials
\begin{equation}
\bA^{\In}=\frac{1}{i\omega} \bE^{\In}, \quad \phi^{\In}=0,
\end{equation}
does not lead to stable uncoupling, since the vector potential is unbounded, as $\omega \to 0$. But, as mentioned above, the Lorenz gauge
does not, by itself, impose uniqueness on the governing potentials.
It is easy to check that the vector and scalar potentials defined by
\begin{equation}
\begin{aligned}
\bA'^{\In}&=-\mathbf{u}(\bx\cdot \bE_p)\sqrt{\mu\epsilon}e^{ik\mathbf{u}\cdot \bx},\\
\phi'^{\In}&=-\mathbf{\bx}\cdot \bE_pe^{ik\mathbf{u}\cdot \bx},
\end{aligned}
\end{equation}
satisfy the Lorenz gauge condition
\begin{equation}\label{planewaveEM}
\begin{aligned}
\nabla\cdot\bA'^{\In}(\bx)=i\omega\mu\epsilon\phi'^{\In}(\bx),
\end{aligned}
\end{equation}
both $\bA'^{\In}$ and $\phi'^{\In}$ are bounded Helmholtz potentials, as $\omega \to 0$,
and represent the same incoming plane wave (\ref{emplanew}). See
Appendix \ref{Partial_wave_expansion} for more details how to stably
decompose incoming/outgoing electric and magnetic multipole fields
(Debye sources).

\section{The DPIE and the Aharonov-Bohm effect}

In classical physics, the Maxwell equations are described in terms of the
components of the electric and magnetic fields, with the vector and
scalar potentials viewed as matters of computational convenience.
In quantum mechanics, however, it was shown by Aharonov and Bohm 
\cite{Aharonovbohm} that an electron is sensitive to the vector potential $\bA$
itself, in regions where $\bE$ and $\bH$ are identically zero (the Aharonov-Bohm effect).

Let us first recall that the two pairs of potentials
$\{\bA,\phi\}$ and $\{\bA',\phi'\}$ produce the same electromagnetic field, so long as
they satisfy the condition
\begin{equation}\label{gaugecond2}
\begin{aligned}
\mathbf{A'}&=\mathbf{A}+\nabla \psi,\\
\phi'&=\phi+i\omega\psi \, .
\end{aligned}
\end{equation}

In multiply connected regions at zero frequency, however, the situation is more complex.
There exist potentials which give rise to identical fields that are not related
according to (\ref{gaugecond2}). In particular, the potentials 
\begin{equation}\label{zerononzero}
\begin{aligned}
\bA_0&=\mathbf{Z_1},&\phi_0=0, \\
\bA'_0&=\mathbf{0},&\phi'_0=0,
\end{aligned}
\end{equation}
where $\mathbf{Z_1}$ is an exterior harmonic Neumann field, give rise
to zero electromagnetic fields in the exterior.  $\mathbf{Z_1}$,
however, is not the gradient of a single-valued harmonic function.

\begin{figure}[H]
\begin{center}
\includegraphics[width=3in]{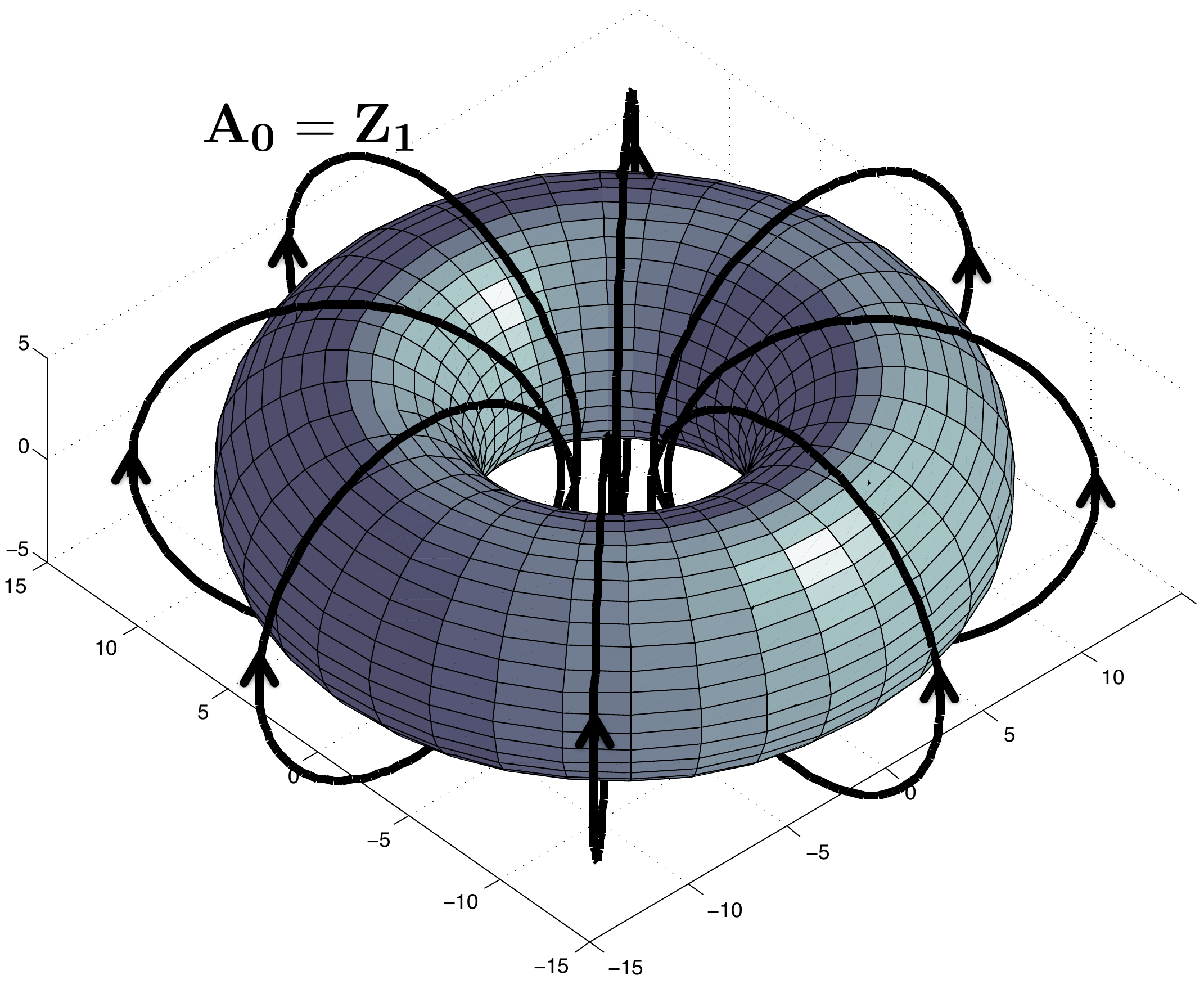}%
\end{center}
\caption{In the exterior of a torus, a harmonic Neumann field $Z_1$ serves as a
vector potential $\bA_0$ with the corresponding scalar potential set equal to zero.
For $\bx\in\mathbb{R}^3\backslash D$, the associated electromagnetic
fields $\bE$ and $\bH$ are identically zero.}
\label{fig_Neumann}
\end{figure}

The Aharonov-Bohm effect is based on an experiment that is able to 
distinguish between the physical states  $\bA_0,\phi_0$ and  $\bA'_0,\phi'_0$.
We have taken some liberties with the actual experiment
in \cite{Aharonovbohm} but the physical idea is the same.
In essence, quantum mechanical tunneling permits an electron to be aware of the 
electromagnetic field in the {\em interior} of $D$, even though it is a perfect
conductor. For $\bA'_0,\phi'_0$, the 
field is identically zero in the interior, but for 
$\bA_0,\phi_0$, it is not. As discussed in \cite{CK1,MICH,gDEBYE,WERNER},
$\bA_0$ can be viewed as the field induced by an axisymmetric current density 
flowing on the surface in the direction of the arrows in Fig. \ref{fig_Neumann}. 
This induces a non-trivial magnetic field within the torus. Electrons, as a result,
sense whether they traveled through the hole of the torus or passed by the torus
on the outside. 
The DPIE formalism easily distinguishes between these two cases, since
we deal with the vector and scalar potentials directly.
Thus, $\bA_0,\phi_0,\bA'_0$ and $\phi'_0$ in (\ref{zerononzero}) satisfy
\begin{equation}
\begin{aligned}
\bn\times \bA_0|_{\partial D}&=\bn\times \mathbf{Z_1},&\nabla\cdot \bA_0|_{\partial D}&=0,&\quad \phi_0|_{\partial D}=0,\\
\bn\times \bA'_0|_{\partial D}&=\mathbf{0},&\nabla\cdot \bA'_0|_{\partial D}&=0,& \quad \phi'_0|_{\partial D}=0\, .
\end{aligned}
\end{equation}

\section{The DPIE in the high frequency regime}

Theorems \ref{HSSth} and \ref{HVSth} suggest that the numerical
solution of scattering problems in the presence of perfect conductors
can be effectively solved through the use of the DPIE, defined by
equations (\ref{HSS1}) and (\ref{HVS1}).  The scalar part of these
equations (\ref{HSS1}) is a Fredholm equation of the second kind,
invertible for all frequencies.  As stated in Theorem \ref{HVSth},
however, the vector part (\ref{HVS1}) is a Fredholm equation only for
sufficiently small coupling constant
$|\eta|<\|\overline{\overline{R}}\|^{-1}$.  The difficulty is that the
operator $\overline{\overline{R}}$ is continuous and bounded, but not
compact.  In fact, its spectrum has three cluster points:
$\lambda=0.5, \lambda=0.5+i 0.5$ and $\lambda=0.5-i 0.5$ (see
\cite{Bruno_Krylov,Nedelec}).  While the uniqueness proof holds for
arbitrary values of $\eta$, existence requires further analysis. We
suspect that the vector part of DPIE is invertible for all frequencies
and leave the formal result as a conjecture.  In fact, numerical
experiments suggest that $\eta$ should not be chosen too small when
seeking to optimize the condition number of the DPIE.

Our interest in the DPIE formulation grew out of issues in low-frequency scattering.
Nevertheless, we would like to find a representation that is effective at
all frequencies, and this will involve a slight rescaling of the equations. 
In order to carry out a suitable analysis, we follow \cite{Kress_sphere} and
study scattering from the unit sphere $\partial D=\{\bx : \|\bx\|=1\}$. 
For $k \leq 1$, setting $\eta = 1$ works well, while for $k>1$ the optimal
scaling factor $\eta \approx k$ (see \cite{Kress_sphere}).
Setting $\eta = k$, instead of (\ref{HSS1}), we have the 
{\em scaled} DPIEs integral equation:
\begin{equation}\label{HSS}
\begin{aligned}
\frac{\sigma}{2}+D_k\sigma-i k S_k\sigma-\sum_{j=1}^NV_j\chi_j=f,\\
\int_{\partial D_j} \big( \frac{1}{k} D'_k\sigma+i\frac{\sigma}{2} -i S'_k\sigma \big) ds=\frac{1}{k} Q_j \, ,
\end{aligned}
\end{equation}
where the second set of equations has been multiplied by a factor of $\frac{1}{k}$. 
For the vector modified Dirichlet problem, when $k>1$, we replace
(\ref{vecrep:DPIEv}) with
\begin{equation} 
\bA^{\Sc}=\nabla\times S_k[\mathbf{a}](\bx)- kS_k[\bn\varrho](\bx)+
i \big( k S_k[\mathbf{\bn\times a}](\bx)+ \nabla S_k[\varrho](\bx) \big),
\end{equation}
where we have multiplied the single-layer potential terms by $k$, and 
set $\eta = 1$.
We also rescale the boundary condition 
$\nabla \cdot \bA^{\Sc} = -\nabla \cdot \bA^{\Sc} + v_n$ in the modified Dirichlet
problem, dividing each side by $k$.
These changes lead to the {\em scaled DPIEv} integral equation:
\begin{equation}\label{DPIEv1}
\begin{aligned}
\frac{1}{2}
\left(\begin{array}{c}\mathbf{a} \\ \rho \end{array} \right) 
+
\overline{\overline{L_s}} 
\left(\begin{array}{c}\mathbf{a} \\ \rho \end{array} \right) 
+
i \overline{\overline{R_s}} 
\left(\begin{array}{c}\mathbf{a} \\ \rho \end{array} \right) 
+
\left(\begin{array}{c} 0 \\ \sum_{j=1}^{N} v_j \chi_j \end{array} \right) 
=
\left(\begin{array}{c} \mathbf{f} \\ \frac{1}{k} h \end{array} \right), \\
\int_{\partial D_j} \Big( \bn\cdot\nabla \times S_k\mathbf{a}-k\bn\cdot S_k(\bn\varrho)+i \big(k\bn\cdot S_k(\mathbf{n\times a})-\frac{\varrho}{2}+S'_k \varrho \big)  \Big) ds=q_j,\\
\end{aligned}
\end{equation}
where
\begin{equation}
{\overline{\overline{L_s}}}=\left( \begin{array}{cc}
L_{11} & kL_{12}\\
\frac{1}{k}L_{21} & L_{22}
\end{array} \right),
\ \
{\overline{\overline{R_s}}}=\left( \begin{array}{cc}
kR_{11} & R_{12}\\
R_{21} & \frac{1}{k}R_{22}
\end{array} \right),
\end{equation}
with 
$L_{ij}, R_{ij}$ defined in (\ref{operators_vect}).

We turn now to the analysis of the DPIE on the unit sphere, where
exact expressions for the various integral operators 
have been worked out in detail \cite{SPHANAL}. More precisely, 
using scalar and vector spherical harmonics, each
integral operator has a simple signature, which has been tabulated in \cite{SPHANAL}.
This permits us to compute the condition number and spectrum of the DPIEs and DPIEv
integral equations.
In Figs. \ref{fig_Eig} and \ref{fig_svd}, we plot the spectrum and the singular 
values of the {\em scaled} DPIEv (\ref{DPIEv1}).

\begin{figure}[H]
\begin{center}
\includegraphics[width=3in]{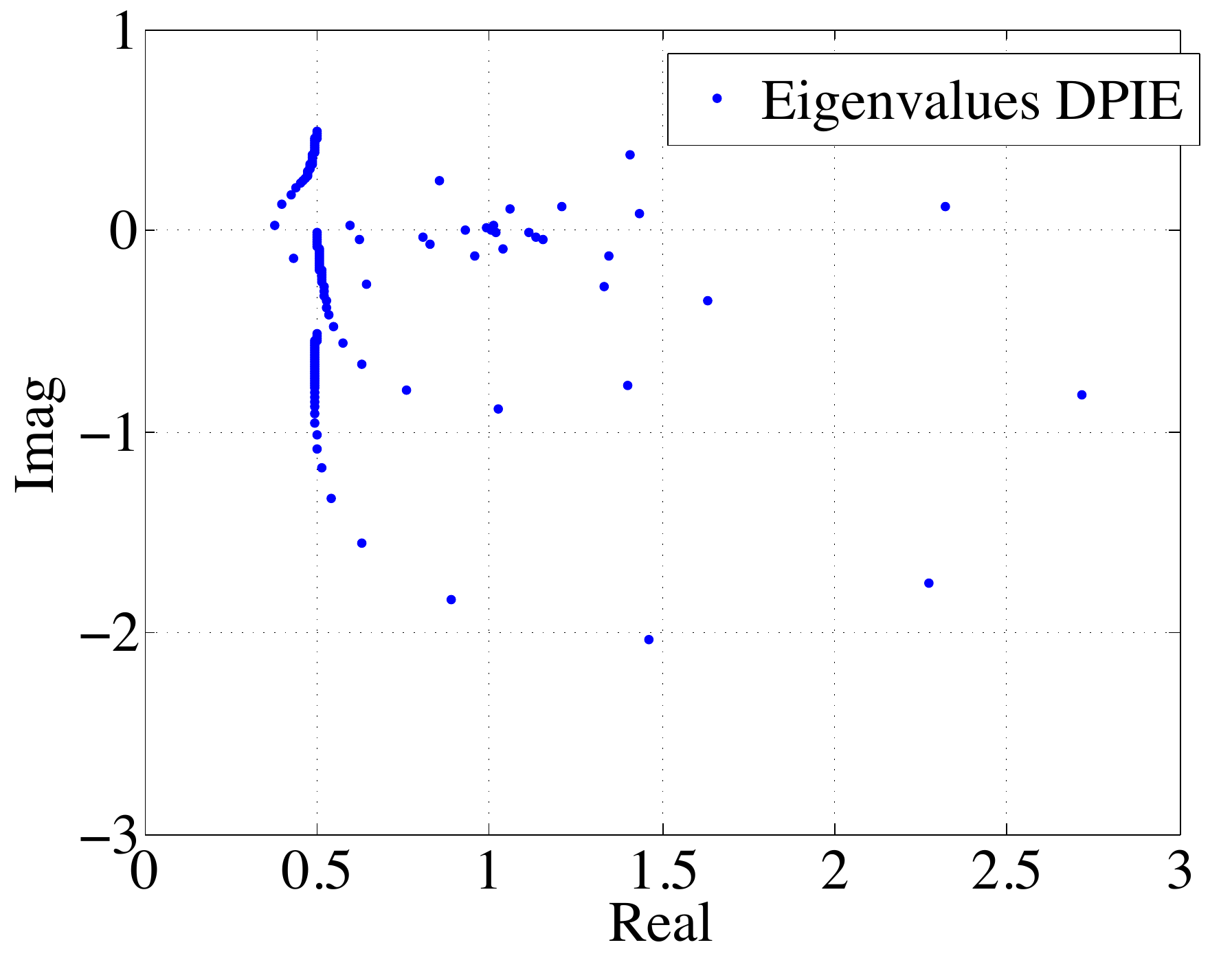}%
\end{center}
\caption{Spectrum of the scaled DPIEv integral equation (\ref{DPIEv1}) for 
a spherical scatterer of radius 1 at $k=10$.
As discussed in the text, there are three different cluster points:
at $\lambda=0.5, \lambda=0.5+i0.5$ and $\lambda=0.5-i0.5$.}
\label{fig_Eig}
\end{figure}

\begin{figure}[H]
\begin{center}
\includegraphics[width=3in]{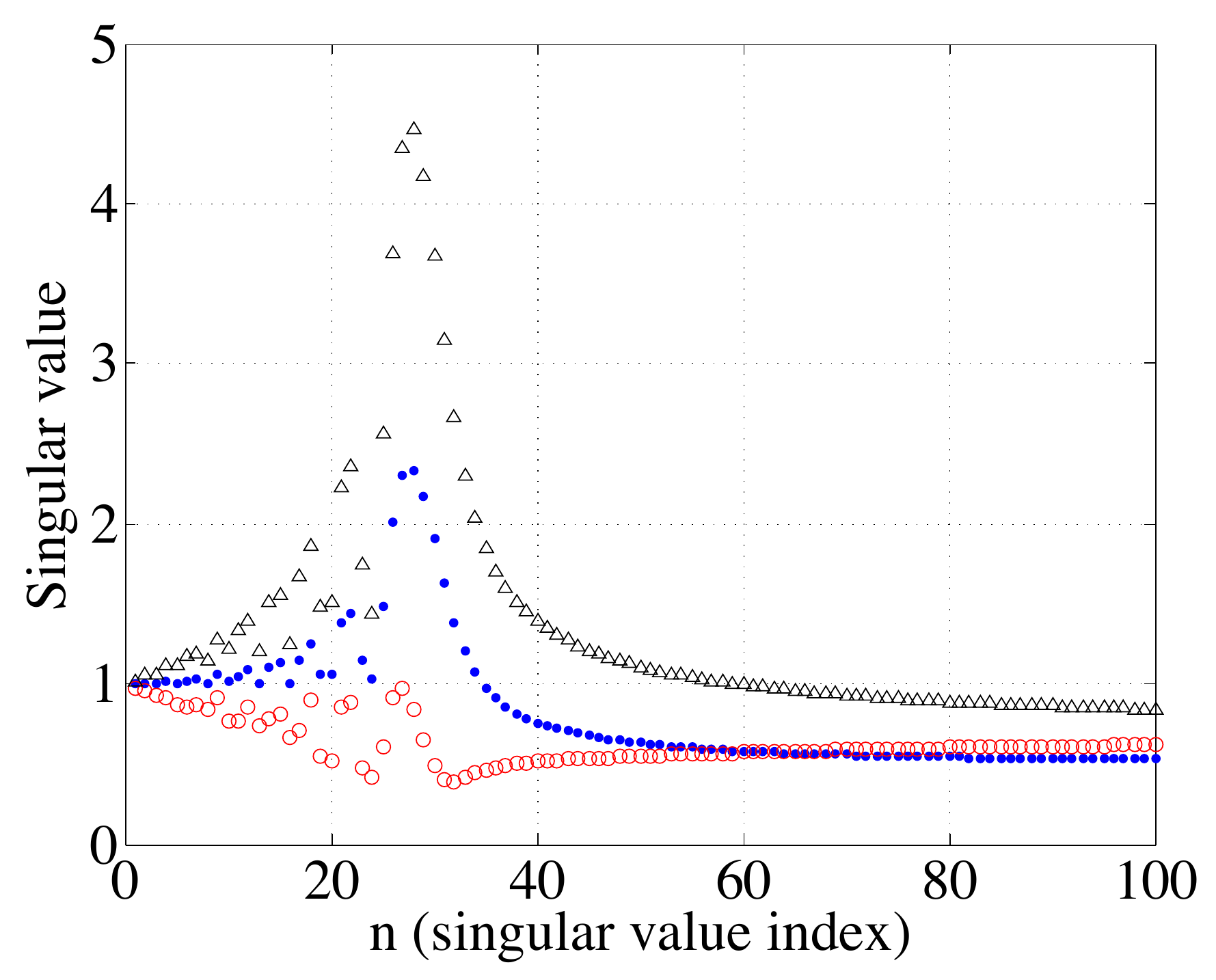}%
\end{center}
\caption{Singular values of the scaled DPIEv integral equation 
(\ref{DPIEv1}) for a spherical scatterer of radius 1 at $k=30$.
Three curves are shown, since (\ref{DPIEv1}) is a vector integral
equation with three unknowns (the scalar $\varrho$ and the tangential surface
vector $\mathbf{a}$).}
\label{fig_svd}
\end{figure}

In Fig. \ref{fig_cond_comp}, we compare the condition numbers of the 
DPIEv and the scaled DPIEv.
Equally revealing is to plot the spectrum of the DPIEv and the scaled DPIEv
(Fig. \ref{fig_Eig_comp}).

\begin{figure}[H]
\begin{center}
\includegraphics[width=3in]{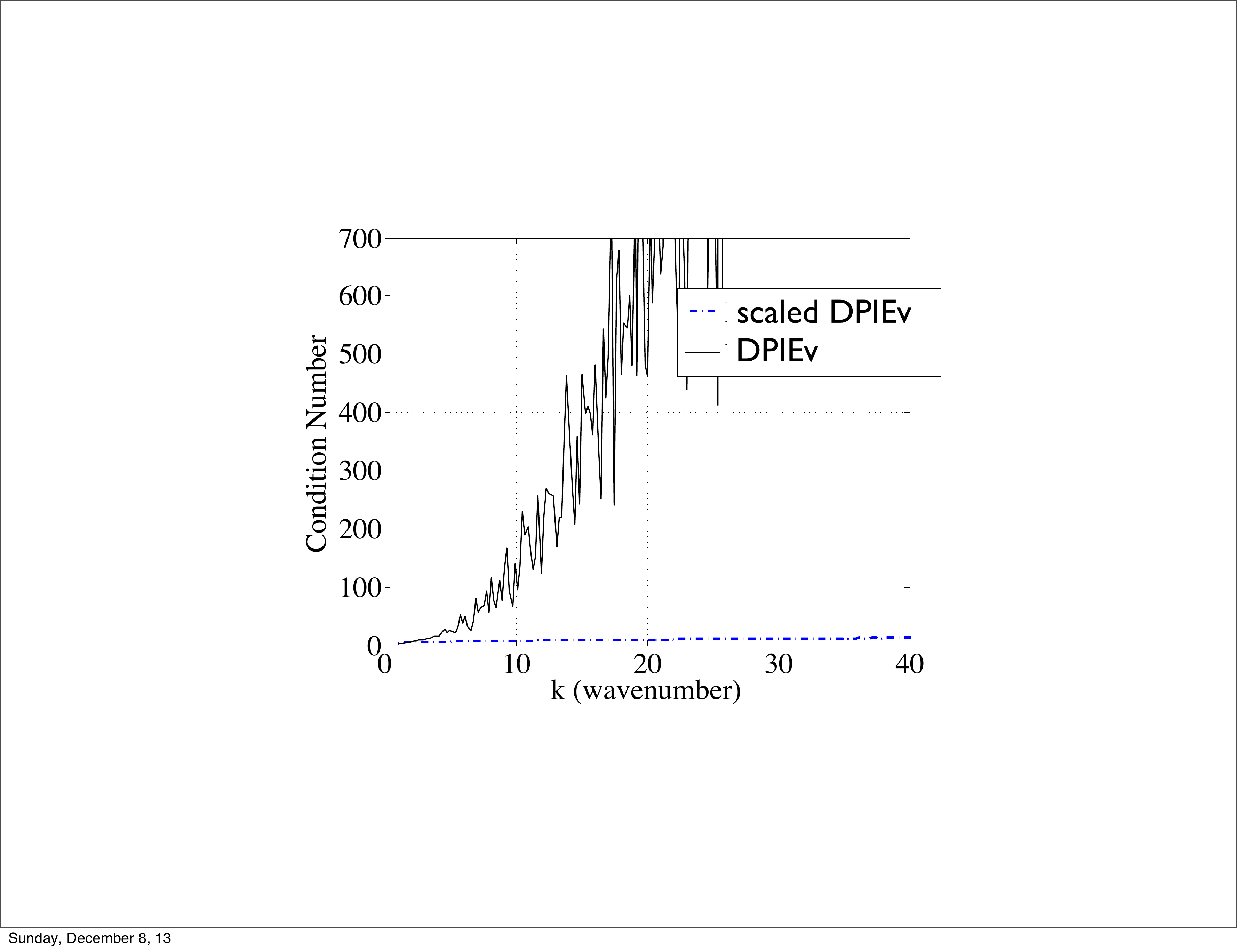}%
\end{center}
\caption{Condition number of the scaled DPIEv (\ref{DPIEv1}) and the
original DPIEv (\ref{HVS1}) for a spherical scatterer of radius 1
as a function of wavenumber $k$.}
\label{fig_cond_comp}
\end{figure}

\begin{figure}[H]
\begin{center}
\includegraphics[width=3in]{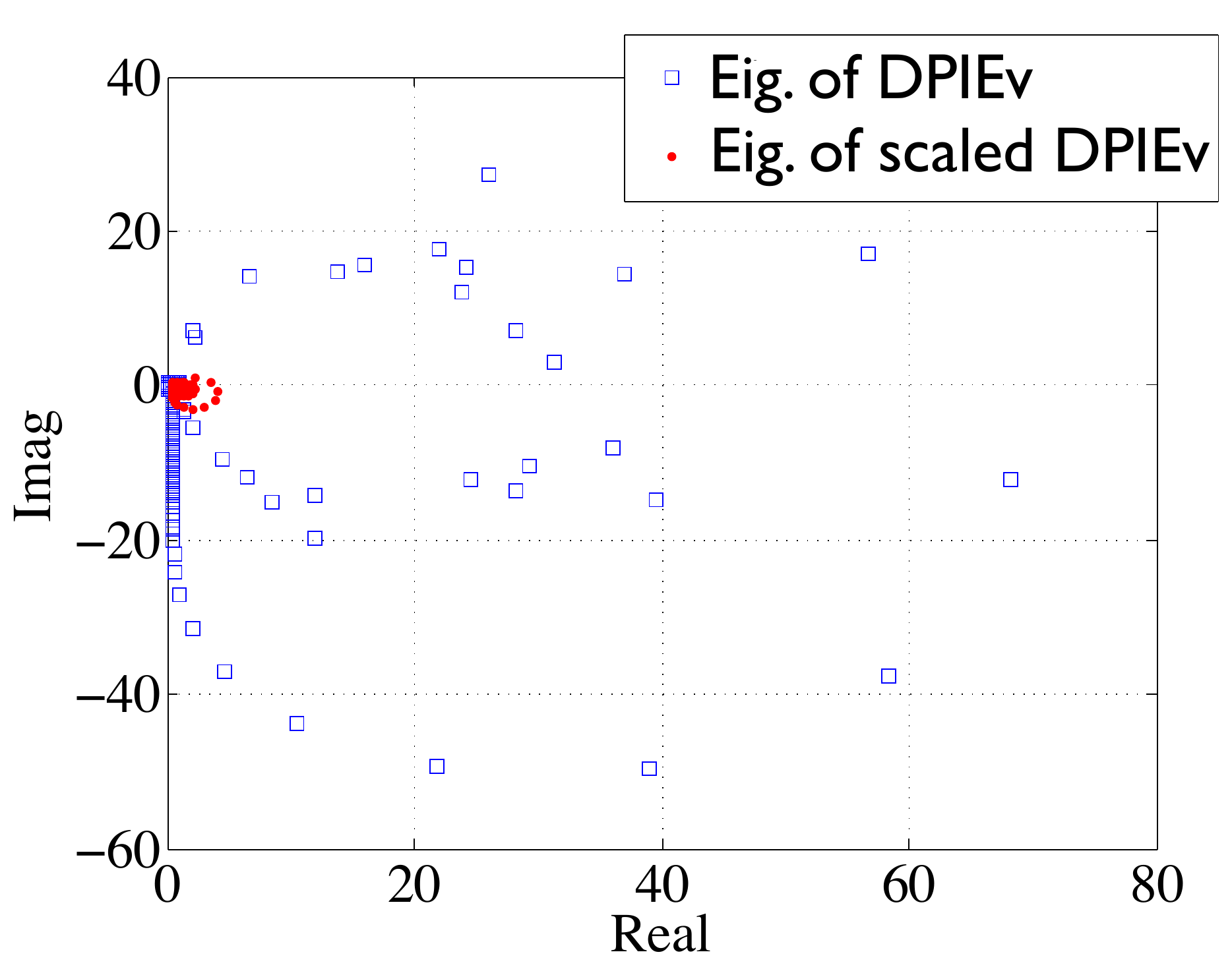}%
\end{center}
\caption{The spectrum of the original DPIEv and the scaled DPIEv (\ref{DPIEv1}) for a 
spherical scatterer of radius 1 at $k=10$.}
\label{fig_Eig_comp}
\end{figure}

While the analysis above has been carried out only for a spherical scatterer,
it is reasonable to expect that the scaled DPIEv is likely to have a big impact even 
for surfaces of arbitrary shape.
As a final point of comparison, 
Fig. \ref{fig_cond_comp2} plots the condition number of various
integral equations that have been suggested for the solution of the Maxwell equations
over a wide range of frequencies. Note that
the scaled DPIE produces slightly worse condition numbers for $1\ll k $. 
We suspect that improvements in the DPIE representation may lead to better
performance, but leave that for future research.
Note, however,
that the scaled DPIE is surprisingly effective in the high-frequency 
regime, despite the fact that it was conceived in order to overcome
topological low-frequency breakdown for scatterers of non-zero genus.

\begin{figure}[H]
\begin{center}
\includegraphics[width=3in]{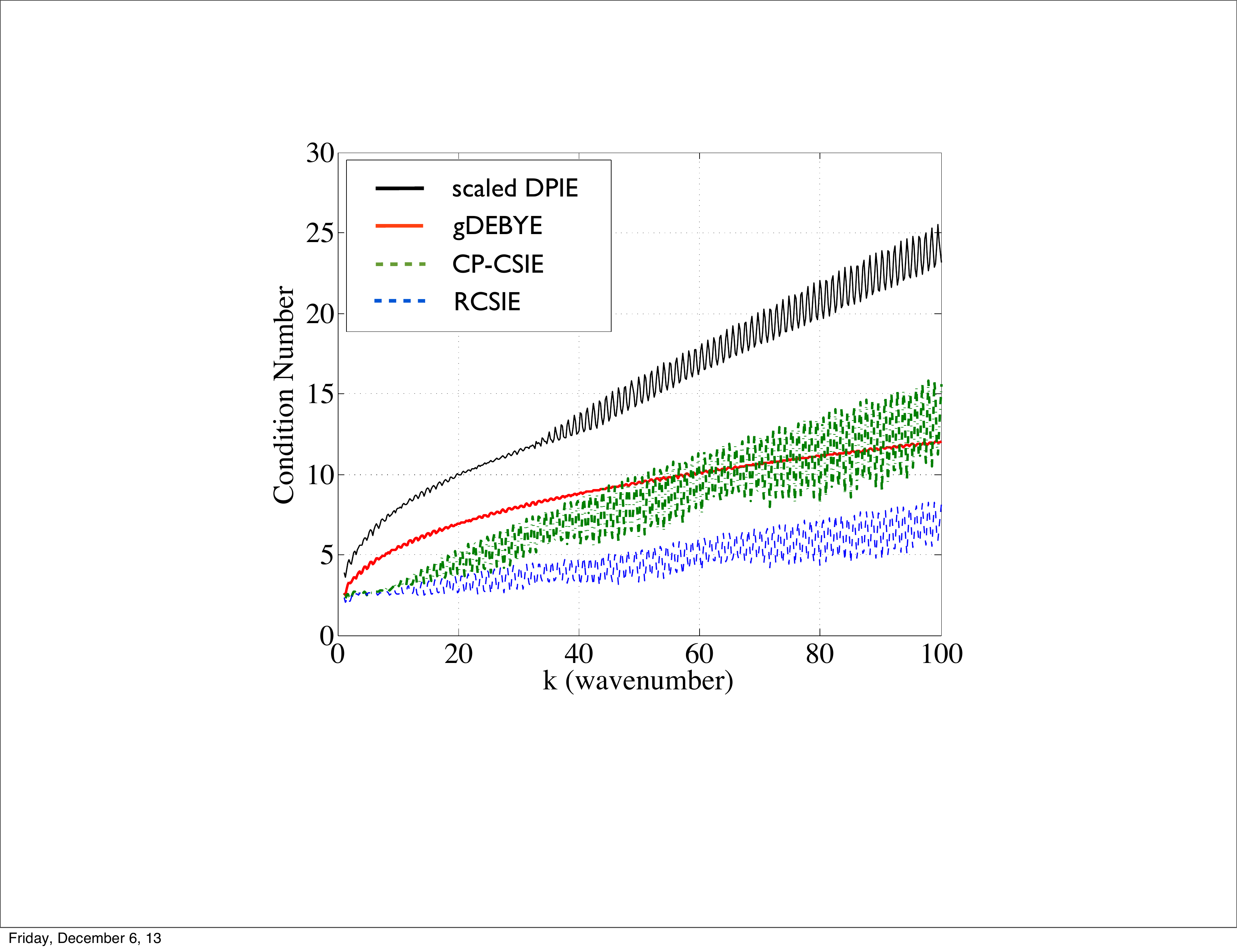}%
\end{center}
\caption{Comparison of the condition numbers of 
several resonance-free integral equations: the
scaled DPIE, the 
generalized Debye source equation (gDEBYE) \cite{gDEBYE}, 
the Calderon preconditioning combined source integral equation (CP-CSIE)
\cite{Rokhlin_Maxwell}, and
the regularized combined source integral equation (RCSIE) \cite{Bruno_Krylov}.}
\label{fig_cond_comp2}
\end{figure}


\section{Conclusions}

We have presented a new formulation of the problem of electromagnetic
scattering from perfect electric conductors. Rather than imposing
boundary conditions on the field quantities themselves, we have
derived well-posed boundary value problems for the vector and scalar
potentials themselves, in the Lorenz gauge. This requires that we
describe incoming fields in the same gauge, but that poses no
fundamental obstacle. We have explained, in section
\ref{incoming:sec}, how to do this for partial wave expansions, plane
waves, and (of course) the potentials induced by known impressed
currents and charges.  We have also developed integral representations
for the vector and scalar potentials that lead to well-conditioned
integral equations (the decoupled potential integral equations or
DPIE). Most importantly, we have shown that the DPIE is insensitive to
the genus of the scatterer. This is one of the few schemes of which we
are aware that does not suffer from {\em topological} low-frequency
breakdown without substantial complications (including the computation
of special basis functions that span the space of surface harmonic
vector fields \cite{gDEBYE}).

Careful analysis of scattering from a unit sphere has demonstrated
that the method works well across a range of frequencies, but the DPIE
is likely to be of particular utility in the low frequency regime ---
especially for structures with complicated multiply-connected
geometry. We will report detailed numerical experiments at a later
date.

\section*{Acknowledgements}
This work was supported in part by the Applied
Mathematical Sciences Program of the U.S. Department of Energy
under Contract DEFGO288ER25053 (L.G.) and
by the Office of the Assistant Secretary of Defense for Research and Engineering 
and AFOSR under NSSEFF Program Award FA9550-10-1-0180 (L.G. and Z. G.) and in part by the
 Spanish Ministry of Science and Innovation (Ministerio de Ciencia e Innovacion) under
 the projects CSD2008-00068 and TEC2010-20841-C04-01.
The authors thank A. Kl\"{o}ckner and M. OÕNeil for many useful discussions.

\appendix

\section{Proofs of existence and stability theorems}\label{theoremsApp}

\vspace{.2in}

\noindent
\textbf{Proof of Theorem \ref{HSSth}}: \\
\noindent
Consider a solution $\tilde{\sigma},\{\tilde{V}_j\}_{j=1}^N$ of the homogeneous 
equation (\ref{HSS1}):
\begin{equation}
\begin{aligned}
\frac{\tilde{\sigma}}{2}+D_k\tilde{\sigma}-i\eta S_k\tilde{\sigma}-\sum_{j=1}^N\tilde{V}_j\chi_j=0,\\
\int_{\partial D_j} \big( D'_k\tilde{\sigma}+i\eta\frac{\tilde{\sigma}}{2} -i\eta S'_k\tilde{\sigma} \big) ds=0 \, .
\end{aligned}
\end{equation}
For this solution, the scalar function and constants
\begin{equation}
\phi^{\Sc}(\bx)=D_k[\tilde{\sigma}](\bx)-i\eta S_k[\tilde{\sigma}](\bx), \ \ \{\tilde{V}_j\}_{j=1}^N\\
\end{equation}
satisfy the scalar modified Dirichlet problem with right-hand side 
$f=0,\{Q_j=0\}_{j=1}^N$. 
By Theorem \ref{uniq_S}, we have $\phi^{\Sc}=0,\{\tilde{V}_j=0\}_{j=1}^N$. 
As $\phi^{\Sc}$ is represented by a combination of single and double layers, 
it is known that $\tilde{\sigma}=0$. This proves uniqueness (see \cite{CK1,CK2}).
Note that the operators $D_k$ and $S_k$ defined on $C^{0,\alpha}(\partial D)$ are compact 
(see \cite{CK1}). The rest of the operators in (\ref{HSS1}) are finite rank, 
so that (\ref{HSS}) is a second kind equation when acting on the space 
$C^{0,\alpha}(\partial D)\times \mathbb{C}^N$, where $ \mathbb{C}^N$ is equipped with 
the usual finite-dimensional topology. Note also that, as a function of $k$, the 
operators involved are continuous in the range $k\in[0,k_{\max}]$ for any fixed 
$k_{\max}$. This implies that the operators involved in equation (\ref{HSS1}) are not 
only compact, but collectively compact as well (see \cite{ANSELONE,ATKINSON}).
By the Fredholm theorem, for any right hand side 
$f,\{\tilde{Q}_j\}_{j=1}^N\in C^{0,\alpha}(\partial D)\times \mathbb{C}^N$, 
there exists a solution 
$\sigma,\{\tilde{V}_j\}_{j=1}^N\in C^{0,\alpha}(\partial D)\times \mathbb{C}^N$.

\vspace{.2in}

\noindent
\textbf{Proof of Theorem \ref{HVSth}}: \\
\noindent
Consider a solution 
$\tilde{\mathbf{a}},\tilde{\varrho},\{\tilde{v}_j\}_{j=1}^N$ of the 
homogeneous equation (\ref{HVS}):
\begin{equation}\label{HVS}
\begin{aligned}
\frac{1}{2}
\left(\begin{array}{c}\mathbf{\tilde{a}}\\ \tilde{\varrho} \end{array}\right) +\overline{\overline{L}}\left(\begin{array}{c}\mathbf{\tilde{a}}\\ \tilde{\varrho} \end{array}\right)+i\eta \overline{\overline{R}}\left(\begin{array}{c}\mathbf{\tilde{a}}\\ \tilde{\varrho} \end{array}\right)-\left(\begin{array}{c}0\\ \sum_{j=1}^Nv_j \chi_j \end{array}\right) = \left(\begin{array}{c}\mathbf{0}\\ 0 \end{array}\right), \\
\int_{\partial D_j} \Big( \bn\cdot\nabla \times S_k\mathbf{\tilde{a}}-\bn\cdot S_k(\bn\tilde{\varrho})+i\eta \big(\bn\cdot S_k(\mathbf{n\times \tilde{a}})-\frac{\tilde{\varrho}}{2}+S'_k V_0\tilde{\varrho} \big)  \Big) ds=0 \, .
\end{aligned}
\end{equation}
For this solution, the vector field and constants
\begin{equation}\label{repAscat}
\bA^{\Sc}=\nabla\times S_k[\mathbf{\tilde{a}}](\bx)- S_k[\bn\tilde{\varrho}](\bx)+i\eta \big( S_k[\mathbf{n\times \tilde{a}}](\bx)+ \nabla S_k[\tilde{\varrho}](\bx) \big), \ \ \{\tilde{v}_j\}_{j=1}^N
\end{equation}
satisfy the vector modified Dirichlet problem, 
with right hand side $\mathbf{f}=0,h=0,\{q_j=0\}_{j=1}^N$. 
By Theorem \ref{uniq_V}, we have $\bA^{\Sc}=0,\{\tilde{V}_j=0\}_{j=1}^N$. 
It is known that a zero field $\bA^{\Sc}=0$ with this representation (\ref{repAscat}) 
must have trivial sources (see \cite{CK1}). Thus, $\mathbf{\tilde{a}}=0$ and 
$\tilde{\varrho}=0$, which proves uniqueness (see \cite{CK2}).

Note that the operator $\overline{\overline{L}}$ defined on 
$T^{0,\alpha}(\partial D)\times C^{0,\alpha}(\partial D)$ is compact 
(see Table \cite{CK1}). 
The operator $\overline{\overline{R}}$ is continuous and bounded on $T^{0,\alpha}(\partial D)\times C^{0,\alpha}(\partial D)$, but not compact. 
However, for a choice of the 
constant $|\eta|<\|\overline{\overline{R}}\|^{-1}$, 
$I+i\eta\overline{\overline{R}}$ has a bounded inverse given by its Neumann series. 
The rest of the operators in equation (\ref{HVS}) are finite rank, so that 
equation (\ref{HVS}) is second kind when acting on the space 
$T^{0,\alpha}(\partial D)\times C^{0,\alpha}(\partial D)\times \mathbb{C}^N$, 
where $ \mathbb{C}^N$ is  equipped with the usual finite-dimensional topology. 
Note also that, as a function of $k$, the operators involved are continuous in the 
range $k\in[0,k_{\max}]$ for any fixed $k_{\max}$. This implies that the operators 
involved in equation \ref{HVS} are not only compact, but collectively compact as well
(see \cite{ANSELONE}).
By Fredholm theory, for any right hand side 
$\mathbf{f},h,\{\tilde{q}_j\}_{j=1}^N\in T^{0,\alpha}(\partial D)\times C^{0,\alpha}(\partial D)\times \mathbb{C}^N$,
there exists a solution 
$\mathbf{a},\varrho,\{\tilde{v}_j\}_{j=1}^N\in T^{0,\alpha}(\partial D)\times 
C^{0,\alpha}(\partial D)\times \mathbb{C}^N$.

\vspace{.2in}

\noindent
\textbf{Proof of Theorem \ref{unif_s}}: \\
\noindent
The solution of the integral equation (\ref{HSS1}) depends continuously on the 
right hand side $f,\{\tilde{Q}_j\}_{j=1}^N$, with the corresponding H\"{o}lder topology. 
Due to the collective compactness of the operators involved in (\ref{HSS1}), 
the continuity is uniform in $k\in[0,k_{\max}]$. 
The map $\phi^{\Sc}(\sigma,\{V_j\}_{j=1}^N)$ is continuous on 
$C^{0,\alpha}(\partial D)\times \mathbb{C}^N\rightarrow  
C^{0,\alpha}(\mathbf{R}^3\backslash D)\times \mathbb{C}^N$ (see \cite{CK1}). 
By composition, the map $\phi^{\Sc}(f,\{Q_j\}_{j=1}^N)$ is continuous as a map
from $ C^{0,\alpha}(\partial D)\times \mathbb{C}^N\rightarrow 
 C^{0,\alpha}(\mathbf{R}^3\backslash D)\times \mathbb{C}^N$,
uniformly on $k\in[0,k_{\max}]$. That is,
\begin{equation}
\| \phi^{\Sc}\|_{0,\alpha, \mathbf{R}^3\backslash D}\le K_{(\alpha,\partial D,k_{max})}\Big(\| f\|_{0,\alpha,\partial D} +\sum_{j=1}^N|Q_j|^2\Big) \, ,
\end{equation}
where the constant $K_{(\alpha,\partial D,k_{max})}$ depends on $\alpha$, 
the surface $\partial D$, and the maximum frequency $k_{max}$. The result is valid 
uniformly down to zero frequency $k=0$.

\vspace{.2in}

\noindent
\textbf{Proof of Theorem \ref{unif_v}}: \\
\noindent
The solution of (\ref{HVS1}) depends continuously on the right hand side with the 
corresponding H\"{o}lder topology. 
Due to the collective compactness of the operators involved in (\ref{HVS1}), the 
continuity is uniform in $k\in[0,k_{\max}]$.  
The map $\bA^{\Sc}(\mathbf{a},\varrho,\{v_j\}_{j=1}^N):
T^{0,\alpha}(\partial D)\times C^{0,\alpha}(\partial D)\times \mathbb{C}^N\rightarrow 
 T^{0,\alpha}(\mathbf{R}^3\backslash D)\times C^{0,\alpha}(\mathbf{R}^3\backslash D)\times 
\mathbb{C}^N$ is continuous (see \cite{CK1}). By composition, the map 
$\bA^{\Sc}(\mathbf{f},h,\{q_j\}_{j=1}^N): 
T^{0,\alpha}(\partial D)\times C^{0,\alpha}(\partial D)\times \mathbb{C}^N\rightarrow 
 T^{0,\alpha}(\mathbf{R}^3\backslash D)\times C^{0,\alpha}(\mathbf{R}^3\backslash D)\times 
\mathbb{C}^N$ is continuous, uniformly in $k$ for $k\in[0,k_{\max}]$. That is,
\begin{equation}
\| \bA^{\Sc}\|_{0,\alpha, \mathbf{R}^3\backslash D}\le K_{(\alpha,\partial D,k_{max})}\Big(\| \mathbf{f}\|_{0,\alpha,\partial D}+\| h\|_{0,\alpha,\partial D} +\sum_{j=1}^N|q_j|^2\Big) \, ,
\end{equation}
where the constant $K_{(\alpha,\partial D,k_{max})}$ depends on $\alpha$, the surface $\partial D$ and the maximum frequency $k_{max}$. The result is valid uniformly down to zero 
frequency $k=0$.

\vspace{.2in}

\section{Partial wave expansions}\label{Partial_wave_expansion}

An important representation of the 
electromagnetic field is that based on separation of variables 
in spherical coordinates. As shown independently by Lorenz, Debye and Mie
\cite{gDEBYE,PAPAS}, the fields induced by sources in the interior of a sphere
can always be expressed in the exterior of the sphere according to the representation:
\begin{equation}\label{Mie1}
\begin{aligned}
\bE^{far}&=\sum_{m,n} \big[a_{mn} \nabla \times \nabla \times (\bx h_n(k|\bx|)Y_n^m)+i\omega\mu b_{mn} \nabla \times (\bx  f_n(k|\bx|)Y_n^m)\big],\\
\bH^{far}&=\sum_{m,n} \big[b_{mn} \nabla \times \nabla \times (\bx h_n(k|\bx|)Y_n^m)-i\omega\epsilon a_{mn} \nabla \times (\bx f_n(k|\bx|)Y_n^m)\big] ,
\end{aligned}
\end{equation}
where $h_n$ is the spherical Hankel function of the first kind.
For sources in the exterior of the sphere, we have 
\begin{equation}\label{Mie1a}
\begin{aligned}
\bE^{loc}&=\sum_{m,n} \big[a_{mn} \nabla \times \nabla \times (\bx j_n(k|\bx|)Y_n^m)+i\omega\mu b_{mn} \nabla \times (\bx  f_n(k|\bx|)Y_n^m)\big],\\
\bH^{loc}&=\sum_{m,n} \big[b_{mn} \nabla \times \nabla \times (\bx j_n(k|\bx|)Y_n^m)-i\omega\epsilon a_{mn} \nabla \times (\bx f_n(k|\bx|)Y_n^m)\big] ,
\end{aligned}
\end{equation}
where $j_n$ is the spherical Bessel function \cite{Abram}.
In order to obtain a finite static limit, we renormalize and define the 
modified spherical Hankel/Bessel function by 
\begin{equation}
\label{fn_norm}
\begin{aligned}
\widetilde{f}_n(k,r):=
\left\{\begin{array}{ll}
\widetilde{h}_n(k,r)=h_n(kr)\frac{k^{n+1}}{-i(2n-1)(2n-3)...5\cdot3\cdot1},\\
\widetilde{j}_n(k,r)=j_n(kr)\frac{(2n+1)(2n-1)...5\cdot3\cdot1}{k^{n+1}}.
\end{array}\right.
\end{aligned}
\end{equation}
It is easy to check that 
\begin{equation}
\begin{aligned}
\lim_{k\rightarrow 0}\widetilde{f}_n(k,r)=
\left\{\begin{array}{ll}
\lim_{k\rightarrow 0}\widetilde{h}_n(k,r)=\frac{1}{r^{n+1}},\\
\lim_{k\rightarrow 0}\widetilde{j}_n(k,r)=r^n.
\end{array}\right.
\end{aligned}
\end{equation}

With a slight abuse of notation we will refer to both $\bE^{far}$ and
$\bE^{loc}$ as $\bE^{\In}$, and to both $\bH^{far}$ and
$\bH^{loc}$ as $\bH^{\In}$. When the distinction is important, we will specify
the use of $\widetilde{h}_n(k,r)$ or 
$\widetilde{j}_n(k,r)$ as the radial function of interest.

Normalizing the coefficients $a_{mn}, b_{mn}$ by the inverse of the scaling factor in
(\ref{fn_norm}),
we write:
\begin{equation}
\begin{aligned}
\bE^{\In}&=\sum_{m,n} \big[\widetilde{a}_{mn} \nabla \times \nabla \times (\bx \widetilde{f}_n(k,|\bx|)Y_n^m)+i\omega\mu\widetilde{b}_{mn} \nabla \times (\bx  \widetilde{f}_n(k,|\bx|)Y_n^m)\big],\\
\bH^{\In}&=\sum_{m,n} \big[\widetilde{b}_{mn} \nabla \times \nabla \times (\bx \widetilde{f}_n(k,|\bx|)Y_n^m)-i\omega\epsilon\widetilde{a}_{mn} \nabla \times (\bx  \widetilde{f}_n(k,|\bx|)Y_n^m)\big] \, .
\end{aligned}
\end{equation}
The fields of a magnetic multipole of degree $n$ and order $m$ are
defined to be
\begin{equation}
\begin{aligned}
\dot{\bE}_{nm}^{\In}&=i\omega\mu\nabla \times (\bx  \widetilde{f}_n(k,|\bx|)Y_n^m), \\
\dot{\bH}_{nm}^{\In}&=\nabla \times \nabla \times (\bx  \widetilde{f}_n(k,|\bx|)Y_n^m) \, .
\end{aligned}
\end{equation}
The corresponding vector and scalar potentials can be defined by
\begin{equation}
\begin{aligned}
\dot{\bA}_{nm}^{\In}&=\mu\nabla \times (\bx  \widetilde{f}_n(k,|\bx|)Y_n^m),\\
\dot{\phi}_{nm}^{\In}&=0 \, .
\end{aligned}
\end{equation}
They clearly satisfy
\begin{equation}
\begin{aligned}
\Delta \dot{\phi}_{nm}^{\In}+k^2\dot{\phi}_{nm}^{\In}=0,\\
\Delta \dot{\bA}_{nm}^{\In}+k^2\dot{\bA}_{nm}^{\In}=0,\\
\nabla\cdot\dot{\bA}_{nm}^{\In}=i\omega\mu\epsilon\dot{\phi}_{nm}^{\In}  \, .
\end{aligned}
\end{equation}
Moreover,
$\dot{\bA}_{mn}^{\In}$ and $\dot{\phi}_{mn}^{\In}$ are bounded.
The fields of an electric multipole of degree $n$ and order $m$ are defined to be
\begin{equation}
\begin{aligned}
\ddot{\bE}_{nm}^{\In}&=\nabla \times \nabla \times (\bx  \widetilde{f}_n(k,|\bx|)Y_n^m),\\
\ddot{\bH}_{nm}^{\In}&=-i\omega\epsilon\nabla \times (\bx  \widetilde{f}_n(k,|\bx|)Y_n^m) \, .
\end{aligned}
\end{equation}
In this case, however, it is easy to verify that the function which serves as the 
obvious vector potential, namely $\bx  \widetilde{f}_n(k,|\bx|)Y_n^m$,
is {\em not} in the Lorenz gauge. To find a suitable replacement,
we compute:
\begin{equation}\label{miedebye}
\begin{aligned}
\nabla \times \nabla \times (\bx  \widetilde{f}_n(k,|\bx|)Y_n^m)=k^2\bx  \widetilde{f}_n(k,|\bx|)Y_n^m+\nabla\frac{\partial}{\partial r}\big(r\widetilde{f}_n(k,r)Y_n^m\big)_{r=|\bx|} \, .
\end{aligned}
\end{equation}
Note that 
\begin{equation}\label{myident}
\begin{aligned}
\frac{\partial}{\partial r}\big(r\widetilde{f}_n(k,r)Y_n^m\big)=\widetilde{f}_n(k,r)Y_n^m+r\frac{\partial}{\partial r}\widetilde{f}_n(k,r)Y_n^m \, .
\end{aligned}
\end{equation}
The first term $\widetilde{f}_n(k,r)Y_n^m$ is a Helmholtz potential. 
Making use of the following identity for spherical Hankel and Bessel functions \cite{Abram}
\begin{equation}\label{ident_abram}
\begin{aligned}
\frac{n+1}{z}h_n(z)+h'_n(z)=h_{n-1}(z),  \\
\frac{n}{z}j_n(z)-j'_n(z)=j_{n+1}(z),
\end{aligned}
\end{equation}
we have
\begin{equation}
\begin{aligned}
r\frac{\partial }{\partial r}\widetilde{h}_n(k,r)=\frac{rk^2}{2n-1}\widetilde{h}_{n-1}(k,r)-(n+1)\widetilde{h}_n(k,r) \, 
\end{aligned}
\end{equation}
and
\begin{equation}
\begin{aligned}
r\frac{\partial }{\partial r}\widetilde{j}_n(k,r)=-\frac{rk^2}{2n+3}\widetilde{j}_{n+1}(k,r)+n\widetilde{j}_n(k,r) \, .
\end{aligned}
\end{equation}
Multiplying by $Y_n^m$,
\begin{equation}\label{myident2}
\begin{aligned}
r\frac{\partial }{\partial r}\widetilde{h}_n(k,r)Y_n^m&=\frac{rk^2}{2n-1}\widetilde{h}_{n-1}(k,r)Y_n^m-(n+1)\widetilde{h}_n(k,r)Y_n^m, \\
r\frac{\partial }{\partial r}\widetilde{j}_n(k,r)Y_n^m&=-\frac{rk^2}{2n+3}\widetilde{j}_{n+1}(k,r)Y_n^m+n\widetilde{j}_n(k,r)Y_n^m.
\end{aligned}
\end{equation}
The first term on the right-hand side is of the order $O(k)$, while
the second term is a Helmholtz potential and of magnitude $O(1)$. 
Using (\ref{myident2}) and (\ref{myident}) in (\ref{miedebye}), we obtain
\begin{equation}
\begin{aligned}
\nabla \times \nabla \times (\bx  \widetilde{h}_n(k,|\bx|)Y_n^m)=&\ k^2\bx  \widetilde{h}_n(k,|\bx|)Y_n^m+\frac{k^2}{2n-1}\nabla\big(r\widetilde{h}_{n-1}(k,r)Y_n^m\big)_{r=|\bx|}\\
\\& -\nabla\big(n\widetilde{h}_n(k,r)Y_n^m\big)_{r=|\bx|} \, ,
\end{aligned}
\end{equation}
for the outgoing waves, and
\begin{equation}\label{myident3}
\begin{aligned}
\nabla \times \nabla \times (\bx  \widetilde{j}_n(k,|\bx|)Y_n^m)=&\ k^2\bx  \widetilde{j}_n(k,|\bx|)Y_n^m-\frac{k^2}{2n+3}\nabla\big(r\widetilde{j}_{n+1}(k,r)Y_n^m\big)_{r=|\bx|}\\
\\& +\nabla\big((n+1)\widetilde{j}_n(k,r)Y_n^m\big)_{r=|\bx|} \, ,
\end{aligned}
\end{equation}
for the incoming waves.
Note that the last term on the right-hand side of (\ref{myident3}) and the 
left-hand side of (\ref{myident3}) both
satisfy the vector Helmholtz equation. Thus, the first two terms on the right-hand side
of (\ref{myident3}) must together satisfy the vector Helmholtz equation as well.
Dividing those two terms by $k^2$ and multiplying by $-i\omega \mu\epsilon$, we define the corresponding vector and scalar potentials by
\begin{equation}
\begin{aligned}
\ddot{\bA}_{nm}^{\In}&=-i\omega\mu\epsilon\bx  \widetilde{h}_n(k,|\bx|)Y_n^m+\frac{-i\omega\mu\epsilon}{2n-1}\nabla\big(r\widetilde{h}_{n-1}(k,r)Y_n^m\big)_{r=|\bx|},\\
\ddot{\phi}_{nm}^{\In}&=n\widetilde{h}_n(k,|\bx|)Y_n^m \,  ,
\end{aligned}
\end{equation}
for the outgoing waves, and
\begin{equation}
\begin{aligned}
\ddot{\bA}_{nm}^{\In}&=-i\omega\mu\epsilon\bx  \widetilde{j}_n(k,|\bx|)Y_n^m+\frac{i\omega\mu\epsilon}{2n+3}\nabla\big(r\widetilde{j}_{n-1}(k,r)Y_n^m\big)_{r=|\bx|},\\
\ddot{\phi}_{nm}^{\In}&=-(n+1)\widetilde{j}_n(k,|\bx|)Y_n^m \, ,
\end{aligned}
\end{equation}
for the incoming waves.
It is easy to verify that
\begin{equation}
\begin{aligned}
\Delta \ddot{\phi}_{nm}^{\In}+k^2\ddot{\phi}_{nm}^{\In}=0,\\
\Delta \ddot{\bA}_{nm}^{\In}+k^2\ddot{\bA}_{nm}^{\In}=0,\\
\nabla\cdot\ddot{\bA}_{nm}^{\In}=i\omega\mu\epsilon\ddot{\phi}_{nm}^{\In} \, .
\end{aligned}
\end{equation}
The last equation, which enforces the Lorenz gauge, is obtained by 
taking the divergence of (\ref{myident3}).
Clearly, both potentials $\ddot{\bA}_{mn}^{\In}$ and $\ddot{\phi}_{mn}^{\In}$ 
are of the order $O(1)$.

\newpage \bibliographystyle{unsrt}

\end{document}